\newcommand{\set}[1]{\left\{ #1 \right\}}
\newcommand{\IND}{\mathsf{IND}}
\newcommand{\zone}{\set{0,1}}
\newcommand{\cbra}[1]{\left\{#1\right\}}
\newcommand{\rbra}[1]{\left(#1\right)}
\newtheorem{theorem}{Theorem}[section]
\newtheorem{corollary}[theorem]{Corollary}
\newtheorem{lemma}[theorem]{Lemma}
\newtheorem{defi}[theorem]{Definition}
\newtheorem{proposition}[theorem]{Proposition}
\newcommand{\ket}[1]{\ensuremath{\left|#1\right\rangle}}
\newcommand{\bra}[1]{\ensuremath{\left\langle#1\right|}}
\newcommand{\braket}[2]{\ensuremath{\left\langle#1\middle|#2\right\rangle}}
\newcommand{\norm}[1]{\ensuremath{\left\|#1\right\|}}
\newcommand{\sab}{\mathsf{sab}} % this would denote the (original) decision version of f_sab
\newcommand{\ADV}{\mathsf{ADV}}
\newcommand{\ind}{\mathsf{ind}}
\newcommand{\bin}{\mathsf{bin}}
\newcommand{\weak}{\mathsf{weak}}
\newcommand{\strong}{\mathsf{str}}
\newcommand{\bs}{\mathsf{bs}}
\newcommand{\fbs}{\mathsf{fbs}}
\newcommand{\D}{\mathsf{D}}
\newcommand{\Q}{\mathsf{Q}}
\newcommand{\QD}{\mathsf{QD}}
\newcommand{\QS}{\mathsf{QS}}
\newcommand{\R}{\mathsf{R}}
\newcommand{\RS}{\mathsf{RS}}
\newcommand{\sabStar}[2]{[#1,#2,*]}
\newcommand{\sabDagger}[2]{[#1,#2,\dagger]}
\newcommand{\tr}{\mathsf{tr}}
\newcommand{\Tr}{\mathrm{Tr}}
\newcommand{\domainSet}{\Sigma}
\newcommand{\rangeSet}{\Pi}
\title{Quantum Sabotage Complexity}
\author{
    Arjan Cornelissen\thanks{IRIF - CNRS, Paris, France {\tt ajcornelissen@outlook.com}}
    \and
    Nikhil S.~Mande\thanks{University of Liverpool, UK {\tt mande@liverpool.ac.uk}}
    \and
    Subhasree Patro\thanks{Eindhoven University of Technology (TU/e) \& Centrum Wiskunde en Informatica (QuSoft), The Netherlands {\tt patrofied@gmail.com}. This work was partially done while the author was affiliated to Utrecht University.}
}
\begin{document}

    \maketitle

    \begin{abstract}
        Given a Boolean function $f : \zone^n \to \zone$, the goal in the usual query model is to compute $f$ on an unknown input $x \in \zone^n$ while minimizing the number of queries to $x$. One can also consider a ``distinguishing'' problem denoted by $f_\sab$: given an input $x \in f^{-1}(0)$ and an input $y \in f^{-1}(1)$, either all differing locations are replaced by a $*$, or all differing locations are replaced by $\dagger$, and an algorithm's goal is to identify which of these is the case while minimizing the number of queries.

        Ben-David and Kothari [ToC'18] introduced the notion of randomized sabotage complexity of a Boolean function to be the zero-error randomized query complexity of $f_\sab$.
        A natural follow-up question is to understand $\Q(f_\sab)$, the quantum query complexity of $f_\sab$. In this paper, we initiate a systematic study of this. The following are our main results for all Boolean functions $f : \zone^n \to \zone$.
        \begin{itemize}
            \item If we have additional query access to $x$ and $y$, then $\Q(f_\sab) = O(\min\{\Q(f),\sqrt{n}\})$.
            \item If an algorithm is also required to output a differing index of a 0-input and a 1-input, then $\Q(f_\sab) = O(\min\cbra{\Q(f)^{1.5}, \sqrt{n}})$.
            \item $\Q(f_\sab) = \Omega(\sqrt{\fbs(f)})$, where $\fbs(f)$ denotes the fractional block sensitivity of $f$. By known results, along with the results in the previous bullets, this implies that $\Q(f_\sab)$ is polynomially related to $\Q(f)$.
            \item The bound above is easily seen to be tight for standard functions such as And, Or, Majority and Parity. We show that when $f$ is the Indexing function, $\Q(f_\sab) = \Theta(\fbs(f))$, ruling out the possibility that $\Q(f_\sab) = \Theta(\sqrt{\fbs(f)})$ for all $f$.
        \end{itemize}
    \end{abstract}

    \section{Introduction}

    Given a Boolean function $f : \zone^n \to \zone$, the goal in the standard query complexity model is to compute $f(x)$ on an unknown input $x \in \zone^n$ using as few queries to $x$ as possible. One can also consider the following distinguishing problem: given $x \in f^{-1}(0)$ and $y \in f^{-1}(1)$, output an index $i \in [n]$ such that $x_i \neq y_i$. This task can be formulated as follows: consider an arbitrary $x \in f^{-1}(0)$, an arbitrary $y \in f^{-1}(1)$, and then either all indices where $x$ and $y$ differ are replaced by the symbol $*$, or all such indices are replaced by the symbol $\dagger$. The goal of an algorithm is to identify which of these is the case, with query access to this ``sabotaged'' input. A formal description of this task is given below.

    Let $f: D \rightarrow \{0,1\}$ with $D \subseteq \{0,1\}^n$ be a (partial) Boolean function.
    Let $D_0 = f^{-1}(0)$ and $D_1 = f^{-1}(1)$. For any pair $(x,y) \in D_0 \times D_1$,
    define $\sabStar{x}{y} \in \cbra{0, 1, *}^n$ to be
    \begin{align*}
        \sabStar{x}{y}_i = \begin{cases}
            x_i, & \textnormal{ if } x_i = y_i, \\
            *, & \textnormal{ otherwise.}
        \end{cases}
    \end{align*}
    Similarly, for any pair $(x,y) \in D_0 \times D_1$, define $\sabDagger{x}{y} \in \cbra{0, 1, \dagger}^n$ to be
    \begin{align*}
        \sabDagger{x}{y}_i = \begin{cases}
            x_i, & \textnormal{ if } x_i = y_i, \\
            \dagger, & \textnormal{ otherwise.}
        \end{cases}
    \end{align*}
    Let $S_{*}=\{\sabStar{x}{y} \mid (x,y) \in D_0 \times D_1 \}$ and $S_{\dagger}=\{\sabDagger{x}{y} \mid (x,y) \in D_0 \times D_1 \}$. That is, $S_*$ is the set of $*$-sabotaged inputs for $f$ and $S_\dagger$ is the set of $\dagger$-sabotaged inputs for $f$.
    Finally, let $f_{\sab}: S_{*} \cup S_{\dagger} \rightarrow \{0,1\}$ be the function defined by
    \begin{align*}
        f_{\sab}(z) = \begin{cases}
            0, & z \in S_*, \\
            1, & z \in S_\dagger.
        \end{cases}
    \end{align*}
    That is, $f_\sab$ takes as input a sabotaged input to $f$ and identifies if the input is $*$-sabotaged or if it is $\dagger$-sabotaged.

    Ben-David and Kothari~\cite{BK18} introduced the notion of \emph{randomized sabotage complexity} of a Boolean function $f$, defined to be $\RS(f) := \R_0(f_\sab)$, where $\R_0(\cdot)$ denotes randomized zero-error query complexity. It is not hard to see that $\RS(f) = O(\R(f))$ (where $\R(\cdot)$ denotes randomized bounded-error query complexity); this is because a randomized algorithm that succeeds with high probability on both a 0-input $x$ and a $1$-input $y$ must, with high probability, query an index where $x$ and $y$ differ. Ben-David and Kothari also showed that randomized sabotage complexity admits nice composition properties. It is still open whether $\RS(f) = \Theta(\R(f))$ for all total Boolean functions $f$. If true, this would imply that randomized query complexity admits a perfect composition theorem, a goal towards which a lot of research has been done~\cite{AGJKLMSS17, GJPW18, GLSS19, BB20, BDGHMT20, BBGM22, CKMPSS23, San24}. This motivates the study of randomized sabotage complexity.

    In the same paper, they mentioned that one could define $\QS(f) := \Q(f_\sab)$ (here, $\Q(\cdot)$ denotes bounded-error quantum query complexity), but they were unable to show that it lower bounds $\Q(f)$. In a subsequent work~\cite{BK19}, they defined a quantum analog, denoted $\QD(f)$, and called it \emph{quantum distinguishing complexity}. $\QD(f)$ is the minimum number of quantum queries to the input $x \in D$ such that the final states corresponding to 0-inputs and 1-inputs are far from each other. Analogous to their earlier result, they were able to show that $\QD(f) = O(\Q(f))$ for all total $f$. Additionally, using $\QD(f)$ as an intermediate measure, they were able to show a (then) state-of-the-art 5th-power relationship between zero-error quantum query complexity and bounded-error quantum query complexity: $\Q_0(f) = \widetilde{O}(\Q(f)^5)$ for all total $f$. We note here that a 4th-power relationship was subsequently shown between $\D(f)$ and $\Q(f)$~\cite{ABKRT21}, also implying $\Q_0(f) = O(\Q(f)^4)$ for all total $f$. The proof of this relied on Huang's celebrated sensitivity theorem~\cite{Huang19}.

    \subsection{Our results}

    In this paper, we initiate a systematic study of the natural quantum analog of randomized sabotage complexity alluded to in the previous paragraph, which we call \emph{quantum sabotage complexity}, denoted by $\QS(f) := \Q(f_\sab)$. Slightly more formally, we consider the following variants:
    \begin{itemize}
        \item We consider two input models. In the weak input model, the oracle simply has query access to an input in $z \in S_* \cup S_\dagger$. In the strong input model, the oracle additionally has access to the original inputs $x \in D_0$ and $y \in D_1$ that yield the corresponding input in $S_* \cup S_\dagger$. The model under consideration will be clear by adding either $\weak$ or $\strong$ as a subscript to $\QS$.
        \item We also consider two different output models: one where an algorithm is only required to output whether the input was in $S_*$ or in $S_\dagger$, and a stronger version where an algorithm is required to output an index $i \in [n]$ with $z_i \in \cbra{*, \dagger}$. The model under consideration will be clear by adding either no superscript or $\ind$ as a superscript to $\QS$.
    \end{itemize}
    As an example, $\QS_\weak^\ind(f)$ denotes the quantum sabotage complexity of $f$ under the weak input model, and in the output model where an algorithm needs to output a differing index.

    One can also consider these nuances in defining the input and output models in the randomized setting. However, one can easily show that they are all equivalent for randomized algorithms. We refer the reader to Section~\ref{sec:QuantumSabotageComplexity} for a proof of this, and for a formal description of these models.

    An immediate upper bound on $\QS(f)$, for any (partial) Boolean function $f$, follows directly from Grover's algorithm~\cite{Gro96}. Indeed, for any sabotaged input, we know that at least one of the input symbols must be either a $*$ or $\dagger$. Thus, we can simply use the unstructured search algorithm by Grover to find (the position of) such an element in $O(\sqrt{n})$ queries. This immediately tells us that for Boolean functions where $\Q(f) = \omega(\sqrt{n})$, $\QS(f)$ is significantly smaller than $\Q(f)$.

    As mentioned earlier, Ben-David and Kothari left open the question of whether $\QS(f) = O(\Q(f))$, the quantum analog of randomized sabotage complexity being at most randomized query complexity.
    We first observe that in the strong input model, this holds true.

    \begin{restatable}{lemma}{qsvsq}
        \label{lem:qsvsq}
        Let $n$ be a positive integer, let $D \subseteq \zone^n$. Let $f : D \to \zone$ be a (partial) Boolean function. Then,
        \[\QS_\strong(f) = O(\Q(f)).\]
    \end{restatable}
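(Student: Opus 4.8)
\emph{Proof plan.} The idea is that, in the strong input model, a single query to an \emph{unsabotaged} Boolean input of $f$ can be simulated with only $O(1)$ calls to the available oracles, in such a way that the simulated input equals $x$ precisely when $z\in S_*$ and equals $y$ precisely when $z\in S_\dagger$. Concretely, I would define $\hat w\in\zone^n$ by $\hat w_i = x_i$ if $z_i\neq\dagger$ and $\hat w_i = y_i$ if $z_i=\dagger$; equivalently $\hat w_i = x_i \oplus [z_i=\dagger]$, since whenever $z_i=\dagger$ the coordinate $i$ is a differing coordinate and hence $x_i\neq y_i$. A one-line case analysis then shows $\hat w = x$ when $z\in S_*$ (such a $z$ contains no $\dagger$, so $\hat w_i=x_i$ for every $i$) and $\hat w = y$ when $z\in S_\dagger$ (now $z_i=\dagger$ exactly on the coordinates where $x_i\neq y_i$, so $\hat w_i$ equals $x_i=y_i$ on agreeing coordinates and equals $\overline{x_i}=y_i$ on disagreeing coordinates). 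This step crucially uses the defining feature of a sabotaged input: \emph{all} differing coordinates carry the \emph{same} symbol.

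Next I would observe that the query unitary $O_{\hat w}\colon\ket i\ket b\mapsto\ket i\ket{b\oplus\hat w_i}$ can be implemented with $O(1)$ queries to $O_z$ and $O_x$ (in fact $O_y$ is not needed): use a two‑qubit ancilla to hold $z_i$ via one query to $O_z$ (the alphabet $\cbra{0,1,*,\dagger}$ fits in two bits), compute the bit $[z_i=\dagger]$ and XOR it into $b$ using only reversible classical logic, uncompute the ancilla with a second query to $O_z$, and finally XOR $x_i$ into $b$ with one query to $O_x$. No controlled‑oracle calls arise. It is precisely here that the strong input model is used: the gadget needs $O_x$, which the weak model does not provide.

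Finally, take a bounded‑error quantum algorithm $\mathcal A$ for $f$ making $\Q(f)$ queries, and run it with every input query replaced by the $O(1)$‑cost gadget above, outputting whatever $\mathcal A$ outputs. By the case analysis the gadget realises the oracle $O_x$ when $z\in S_*$ and $O_y$ when $z\in S_\dagger$; since $x\in D_0$, $y\in D_1$, and $\mathcal A$ errs with probability at most $1/3$, the resulting algorithm outputs $f_\sab(z)$ with probability at least $2/3$ while using $O(\Q(f))$ queries in total, which proves the claim. The only points needing care are verifying the identity $\hat w\in\cbra{x,y}$ \emph{exactly} (not merely up to the differing coordinates) in each of the two cases, and the routine ancilla/uncomputation bookkeeping in the gadget; neither is a genuine obstacle.
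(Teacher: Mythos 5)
Your proof is correct and takes essentially the same approach as the paper: both simulate a query to $\hat w\in\{x,y\}$ by returning $x_j$ when $z_j\neq\dagger$ and $y_j$ when $z_j=\dagger$, then run a bounded-error algorithm for $f$ with this simulated oracle; the case analysis showing $\hat w=x$ on $*$-inputs and $\hat w=y$ on $\dagger$-inputs is exactly the paper's correctness argument, and the uncomputation bookkeeping you flag is handled in the paper via $(O_w^\strong)^3=(O_w^\strong)^{-1}$. The one small difference is cosmetic: you note that $O_y$ is never needed since $\hat w_i = x_i\oplus[z_i=\dagger]$, whereas the paper's single strong oracle hands you the whole triple $(x_j,y_j,z_j)$ at once, so this saves nothing in that interface.
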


    The proof idea is simple: consider an input $z \in S_* \cup S_\dagger$ obtained by sabotaging $x \in f^{-1}(0)$ and $y \in f^{-1}(1)$. Run a quantum query algorithm for $f$ on input $z$, such that:
    \begin{itemize}
        \item Whenever a bit in $\zone$ is encountered, the algorithm proceeds as normal.
        \item Whenever a $*$ is encountered, query the corresponding bit in $x$.
        \item Whenever a $\dagger$ is encountered, query the corresponding bit in $y$.
    \end{itemize}
    The correctness follows from the following observation:  if $z \in S_*$, then the run of the algorithm is exactly that of the original algorithm on $x$, and if $z \in S_\dagger$, then the run is the same of the original algorithm on $y$.

    This procedure does not return a $*/\dagger$-index, and it is natural to ask if $\QS_\strong^\ind(f) = O(\Q(f))$ as well. While we are unable to show this, we make progress towards this by showing the following, which is our first main result.

    \begin{restatable}{theorem}{qsindvsq}
        \label{thm: main 1}
        Let $n$ be a positive integer, let $D \subseteq \zone^n$, and let $f : D \to \zone$ be a (partial) Boolean function. Then,
        \begin{align*}
            \QS_\strong^\ind(f) = O(\Q(f)^{1.5}).
        \end{align*}
    \end{restatable}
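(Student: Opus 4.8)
The plan is to build an $O(\Q(f))$-query quantum procedure that outputs a sabotaged coordinate (an index $i$ with $z_i\in\cbra{*,\dagger}$) with probability $\Omega(1/\Q(f))$, and then boost via amplitude amplification to the claimed $O(\Q(f)^{1.5})$. Fix a bounded-error quantum algorithm $\mathcal{A}$ for $f$ with $T = O(\Q(f))$ queries and error at most $1/10$, let $z \in S_* \cup S_\dagger$ arise from sabotaging $x \in f^{-1}(0)$ and $y \in f^{-1}(1)$, and let $P = \cbra{i : z_i \notin \zone}$ be the (nonempty) set of sabotaged coordinates.

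The key object is a ``leaky'' run of $\mathcal{A}$ on $z$: carry an extra flag qubit (initialized to $\ket 0$) and a record register, simulate each of $\mathcal{A}$'s queries by a query to $z$, and whenever a queried coordinate $i$ has $z_i \in \cbra{*,\dagger}$, set the flag to $1$, write $i$ into the record register, and freeze (all later gates act as the identity while the flag is $1$). This is a genuine unitary on the extended space using at most $T$ queries to $z$, and whenever it ends with flag $1$ the recorded $i$ is a valid sabotaged coordinate. I would lower bound $\ell := \Pr[\text{flag} = 1]$ by comparing the leaky run with two ordinary runs of $\mathcal{A}$ on the \emph{same} extended space that never set the flag (used only in the analysis): run~(i), which returns $x_i$ on sabotaged coordinates, hence is exactly $\mathcal{A}$ on $x$ and outputs $0$ with probability $\ge 9/10$; and run~(ii), which returns $y_i$ there, hence is exactly $\mathcal{A}$ on $y$ and outputs $1$ with probability $\ge 9/10$. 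Thus the final states of (i) and (ii) are at Euclidean distance $\Omega(1)$.

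Restricted to the flag-$0$ subspace, the leaky run applies, right after each query, the projector $I - \Pi$ with $\Pi$ projecting the query register onto $P$ — the removed part being precisely what is diverted into flag $1$ — and the total removed weight over the $T$ steps is $\ell$. Since the $z$-oracle agrees off $P$ with both the $x$-oracle of run~(i) and the $y$-oracle of run~(ii), we have $(I-\Pi)O_x = (I-\Pi)O_y$, so the flag-$0$ part of the leaky run is \emph{the same object} regardless of whether $x$ or $y$ sits under the sabotaged coordinates. A telescoping hybrid argument (Cauchy--Schwarz over the $T$ steps, the $k$-th term contributing the square root of the weight removed at step $k$) shows that inserting projectors $I-\Pi$ after each of a $T$-query algorithm's queries, with total removed weight $\ell$, perturbs the final state by at most $O(\sqrt{T\ell})$ in Euclidean norm. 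Comparing run~(i) with the leaky run and run~(ii) with the leaky run, and combining with the triangle inequality and the $\Omega(1)$ gap between (i) and (ii), gives $\Omega(1) \le O(\sqrt{T\ell})$, i.e.\ $\ell = \Omega(1/T) = \Omega(1/\Q(f))$.

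Finally, the leaky run is a unitary followed by the projective measurement ``flag $=1$'', succeeding with probability $\ell = \Omega(1/\Q(f))$ in $O(\Q(f))$ queries; amplitude amplification using this explicit lower bound on the success probability amplifies it to a constant in $O(\sqrt{\Q(f)})$ rounds, for a total of $O(\Q(f)^{1.5})$ queries, after which measuring the record register yields a sabotaged coordinate with high probability. (If one does not know a priori whether $z\in S_*$ or $z\in S_\dagger$, one simply diverts on both symbols $\cbra{*,\dagger}$, as above — the analysis is symmetric in each case — or first resolves the case via Lemma~\ref{lem:qsvsq} in $O(\Q(f))$ queries; and $\mathcal{A}$ is only ever run on the in-domain inputs $x,y$ and on the leaky run, so partiality of $f$ is harmless.) The step I expect to be the main obstacle is the bookkeeping that makes ``divert-and-freeze'' a legitimate unitary together with the telescoping estimate — in particular isolating the fact that the leaky run is oracle-agnostic on $P$, which is exactly what ties the behaviours on $x$ and on $y$ together through a common intermediate object.
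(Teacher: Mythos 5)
Your proposal is correct, and the engine is the same one the paper uses -- a hybrid/telescoping estimate plus amplitude amplification -- but the procedure you amplify is genuinely different and, I believe, stronger. The paper's proof (via Theorem~\ref{thm:qsvsqd} and Lemma~\ref{lem: hybrid new}) picks $x$ or $y$ at random, runs $\mathcal{A}$ on that input up to a random time $t \le T$, measures the query register, and verifies with one query to $z$; crucially, "running $\mathcal{A}$ on $x$ (or $y$)" requires the \emph{strong} oracle, since $x$ and $y$ are not recoverable from $z$ at sabotaged positions. Your leaky run instead queries only $z$, diverting into a flag/record register whenever a $*$ or $\dagger$ is seen, and the comparison states in the analysis -- the ordinary runs of $\mathcal{A}$ on $x$ and on $y$ -- are used purely as mathematical objects, never implemented. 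The key fact that makes this work is exactly the one you isolate: $(I-\Pi)\,O_z = (I-\Pi)\,O_x = (I-\Pi)\,O_y$ (as maps into the flag-$0$ subspace), so the flag-$0$ part of the leaky run is a common hybrid between both runs, and the BBBV-style telescoping $\|\psi_x^T - \tilde\psi_T\| \le \sum_k \sqrt{\ell_k} \le \sqrt{T\ell}$, together with its $y$-analogue and the constant gap $\|\psi_x^T - \psi_y^T\| = \Omega(1)$, gives $\ell = \Omega(1/T)$. Since the leaky run and its inverse only touch the weak oracle, and the amplification reflections need no oracle at all, your argument proves the stronger statement $\QS_\weak^\ind(f) = O(\Q(f)^{1.5})$, which via Proposition~\ref{prop:relations-quantum-measures} implies the theorem and also improves the paper's stated bound $\QS_\weak^\ind(f) = O(\Q(f)^4)$. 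You correctly flag the bookkeeping (fresh ancillas per query so the mod-$4$ response can be compared against the Boolean oracles of runs (i) and (ii), a record register kept outside the uncomputation) as the main thing to pin down; none of it looks like a real obstacle, just care. One small remark: the parenthetical about resolving $S_*$ versus $S_\dagger$ first is a non-issue -- your leaky run diverts on both symbols and does not care which case it is in -- so you can drop the appeal to Lemma~\ref{lem:qsvsq} entirely and keep the whole argument in the weak model.
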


    Our result is actually slightly stronger than this; our proof shows that $\QS_\strong^\ind(f) = O(\QD(f)^{1.5})$. This implies Theorem~\ref{thm: main 1} using the observation of Ben-David and Kothari that $\QD(f) = O(\Q(f))$ for all (partial) $f$.

    In order to show Theorem~\ref{thm: main 1}, we take inspiration from the observation that randomized sabotage complexity is at most randomized query complexity. This is true because with high probability, a randomized query algorithm must spot a differing bit between any pair of inputs with different output values. However, this argument cannot immediately be ported to the quantum setting because quantum algorithms can make queries in superposition. We are able to do this, though, by stopping a quantum query algorithm for $f$ at a random time and measuring the index register. We note here that it is important that we have oracle access not only to a sabotaged input $z \in S_* \cup S_\dagger$, but also the inputs $(x, y) \in D_0 \times D_1$ that yielded the underlying sabotaged input. Using arguments reminiscent of the arguments in the hybrid method~\cite{BBBV97}, we are able to show that the success probability of this is only $1/\Q(f)$. Applying amplitude amplification to this process leads to an overhead of $\sqrt{\Q(f)}$, and yields Theorem~\ref{thm: main 1}.

    We remark here that this proof idea is reminiscent of the proof of~\cite[Theorem~1]{BK19}. However, there are some technical subtleties. At a high level, the main subtleties are the following: in the strong oracle model that we consider, it is easy to check whether or not a terminated run of a $\Q$ algorithm actually gives us a $*/\dagger$ index. This is not the case in~\cite[Proof of Theorem~1]{BK19}. This allows us to save upon a quadratic factor because we can do amplitude amplification.

    Our proof approach modifies a core observation by Ben-David and Kothari~\cite[Lemma~12]{BK19}. In their setting, they consider a $T$-query algorithm $\mathcal{A}$ that computes a function $f$ with high probability. Take two inputs $x$ and $y$ such that $f(x) \neq f(y)$, and let $B \subseteq [n]$ be the indices on which $x$ and $y$ differ. Suppose we run this algorithm on $x$, interrupt it right before the $t$th query, and then measure the query register. The probability that we measure an index $i \in B$, we denote by $p_t$. Then, \cite[Lemma~12]{BK19} shows that
    \[\sum_{t=1}^T p_t = \Omega\left(\frac{1}{T}\right).\]

    We show that by adding the probabilities that come from running the algorithm at input $y$, we can replace the lower bound of $\Omega(1/T)$ by a much stronger lower bound of $\Omega(1)$. We state this observation more formally in the following lemma.

    \begin{restatable}{lemma}{hybrid}
        \label{lem: hybrid new}
        Let $x \in \zone^n$ be an input and let $B \subseteq [n]$. Let $\mathcal{A}$ be a $T$-query quantum algorithm that accepts $x$ and rejects $x_B$ with high probability, or more generally produces output states that are a constant distance apart in trace distance for $x$ and $x_B$. Let $p_t$, resp.\ $p_t^B$, be the probability that, when $\mathcal{A}$ is run on $x$, resp.\ $x_B$, up until, but not including, the $t$-th query and then measured, it is found to be querying a position $i \in B$. Then,
        \begin{align*}
            \sum_{t = 1}^T (p_t + p_t^B) = \Omega(1).
        \end{align*}
    \end{restatable}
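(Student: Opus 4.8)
The plan is to run a hybrid argument in the style of~\cite{BBBV97}, but comparing the real execution of $\mathcal{A}$ on $x$ against a "sanitized" execution in which queries to indices in $B$ are answered incorrectly on one of the two inputs. Concretely, define the hybrid states $\ket{\psi_t}$ and $\ket{\phi_t}$ to be the state of $\mathcal{A}$ after $t$ queries, where the first run uses the oracle for $x$ and the second uses the oracle for $x_B$; both runs use the same unitaries between queries, so the only difference at each step is the oracle call. Let $\ket{\psi_t^{(t')}}$ be the state obtained by running with the $x$-oracle for the first $t'$ queries and the $x_B$-oracle afterwards (up to query $t$). A telescoping argument gives $\norm{\ket{\psi_T} - \ket{\phi_T}} \le \sum_{t=1}^{T} \norm{(O_x - O_{x_B})\ket{\psi_{t-1}}}$, where $O_x, O_{x_B}$ are the query oracles. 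Since $O_x$ and $O_{x_B}$ act identically except on indices in $B$, the norm of $(O_x - O_{x_B})\ket{\psi_{t-1}}$ is at most $2$ times the square root of the probability that a measurement of the query register of $\ket{\psi_{t-1}}$ returns an index in $B$, i.e.\ at most $2\sqrt{p_t}$ (with the indexing convention that the $t$-th query acts on the state before that query, so $\ket{\psi_{t-1}}$ is measured to define $p_t$).

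Next I would use the hypothesis that the output states on $x$ and $x_B$ are a constant trace distance apart, hence a constant Euclidean distance apart (after possibly purifying / noting that measurement statistics differ by a constant forces $\norm{\ket{\psi_T}-\ket{\phi_T}} = \Omega(1)$, perhaps after an optimal choice of relative phase, which is a standard and harmless adjustment). Combining with the telescoping bound yields
\[
\Omega(1) \le \norm{\ket{\psi_T} - \ket{\phi_T}} \le 2\sum_{t=1}^{T} \sqrt{p_t}.
\]
This already gives $\sum_{t=1}^T \sqrt{p_t} = \Omega(1)$, which is weaker than what we want: squaring termwise would only give $\sum p_t = \Omega(1/T)$, exactly the bound of~\cite[Lemma~12]{BK19}. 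The point of the present lemma is that by symmetry we also get $\Omega(1) \le 2\sum_{t=1}^{T} \sqrt{p_t^B}$, but more importantly we have freedom in which of the two runs we use to define the hybrid at each step. The key idea is that the telescoping can be done with a \emph{mixed} schedule: at step $t$ we may insert either the $x$-oracle or the $x_B$-oracle, and the error introduced at step $t$ is controlled by $\sqrt{p_t}$ if we are currently in the $x$-branch and by $\sqrt{p_t^B}$ if we are in the $x_B$-branch. Choosing, at each step, whichever branch has the smaller weight, we obtain $\Omega(1) \le 2 \sum_{t=1}^{T} \sqrt{\min\{p_t, p_t^B\}} \le 2 \sum_{t=1}^T \sqrt{p_t p_t^B}^{1/2}$... — no; cleaner is the following.

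The cleanest route, and the one I would actually write, is to observe that on input $x$ the bits in $B$ are all equal to $x_B$ flipped, so an oracle for $x$ restricted to $B$ is the bit-flip of the oracle for $x_B$ restricted to $B$; hence running $\mathcal{A}$ on $x$ and "projecting away" all $B$-amplitude at every step produces the \emph{same} trajectory as running $\mathcal{A}$ on $x_B$ and projecting away all $B$-amplitude. Formally, let $\Pi_B$ be the projector onto query-register values in $B$ and let $U_t$ be $\mathcal{A}$'s unitaries. Define $\ket{\alpha_t}$ by $\ket{\alpha_0} = \ket{\text{init}}$ and $\ket{\alpha_t} = U_t O_x (I - \Pi_B)\ket{\alpha_{t-1}}$, and analogously $\ket{\beta_t}$ with $O_{x_B}$; since $O_x(I-\Pi_B) = O_{x_B}(I-\Pi_B)$, we get $\ket{\alpha_t} = \ket{\beta_t}$ for all $t$. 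Now bound, by the same telescoping, $\norm{\ket{\psi_T} - \ket{\alpha_T}} \le \sum_{t=1}^T \norm{\Pi_B \ket{\psi_{t-1}}} = \sum_t \sqrt{p_t}$ and $\norm{\ket{\phi_T} - \ket{\beta_T}} \le \sum_t \sqrt{p_t^B}$, so by the triangle inequality and $\ket{\alpha_T} = \ket{\beta_T}$,
\[
\Omega(1) = \norm{\ket{\psi_T} - \ket{\phi_T}} \le \sum_{t=1}^T \sqrt{p_t} + \sum_{t=1}^T \sqrt{p_t^B}.
\]
Finally, Cauchy--Schwarz gives $\sum_{t=1}^T \sqrt{p_t} \le \sqrt{T}\,\sqrt{\sum_t p_t}$ and likewise for $p_t^B$, so $\sqrt{\sum_t p_t} + \sqrt{\sum_t p_t^B} = \Omega(1/\sqrt{T})$; this is still only the $\Omega(1/T)$-type bound and is \emph{not} enough. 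So the $\sqrt{T}$ loss from Cauchy--Schwarz must be avoided: instead I would keep the bound $\sum_t(\sqrt{p_t}+\sqrt{p_t^B}) = \Omega(1)$ and note that since each $p_t, p_t^B \le 1$ we have $\sqrt{p_t} \le$ \dots — here termwise one only gets $p_t \le \sqrt{p_t}$, giving $\sum_t(p_t + p_t^B) \ge$ nothing new directly. The genuine fix, which I anticipate is the heart of the argument and the main obstacle, is to be smarter about where the constant trace-distance gap "lives": one shows that because $\mathcal{A}$ distinguishes $x$ from $x_B$, the total $B$-weight accumulated cannot be spread thinly, and a more careful accounting — summing the \emph{squared} perturbations via an $\ell_2$/martingale bound rather than the triangle inequality, i.e.\ the "quantum adversary-style" argument that $\norm{\ket{\psi_T}-\ket{\phi_T}}^2 \le \mathrm{poly} \cdot \sum_t (p_t + p_t^B)$ — is what upgrades the conclusion from $\Omega(1/T)$ to $\Omega(1)$. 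I would therefore replace the crude telescoping with the standard bound $\norm{\ket{\psi_T} - \ket{\alpha_T}}^2 = O\!\big(\sum_t p_t\big)$, valid because the perturbations $\Pi_B\ket{\psi_{t-1}}$ are (near-)orthogonal across time steps after the $U_t$'s are stripped, and similarly for the $x_B$-run; then $\Omega(1) \le \norm{\ket{\psi_T}-\ket{\phi_T}}^2 = O\big(\sum_t p_t + \sum_t p_t^B\big)$, which is exactly the claim. The main obstacle is making this orthogonality-across-time-steps bound precise (it is not literally true, so one needs the $O(\cdot)$ version with a careful constant, tracking that the removed components at distinct steps, once propagated to the end, have small mutual overlap); this is the one place the proof requires real care rather than bookkeeping.
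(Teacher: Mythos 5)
Your proposal does not arrive at a complete proof, and you are candid about where it breaks down, so let me name the gap precisely and contrast it with the mechanism the paper actually uses.

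You correctly recognize that the standard BBBV telescoping of Euclidean distances gives only $\sum_t \sqrt{p_t} = \Omega(1)$, and that your ``sanitized trajectory'' variant (projecting away the $B$-component at every step) still only yields $\sum_t \sqrt{p_t} + \sum_t \sqrt{p_t^B} = \Omega(1)$ --- both equivalent to the $\Omega(1/T)$-type bound and strictly weaker than the claim. Your proposed repair --- an $\ell_2$/martingale bound $\norm{\ket{\psi_T} - \ket{\alpha_T}}^2 = O(\sum_t p_t)$, resting on near-orthogonality of the propagated per-step perturbations --- is left unproved, and you explicitly flag that the orthogonality ``is not literally true'' and ``requires real care.'' That flag is correct: the perturbation inserted at step $t$, once propagated forward through $W_T \cdots W_{t+1}$ with $W_k = U_k O_x(I - \Pi_B)$, has nonvanishing overlap with the one inserted at step $s > t$ (already for $T = 3$, $t = 1$, $s = 2$ one finds the cross term involves $(I-\Pi_B)O_x^\dagger U_2^\dagger (I-\Pi_B) U_2 O_x$, which is not zero in general). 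Proving the $O(\cdot)$ version would require controlling these cross terms, and it is not clear this is possible without reintroducing the very $T$-dependence you are trying to avoid. As it stands, your argument cannot be completed as written.

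The paper sidesteps the entire Euclidean-distance framework. It tracks the \emph{magnitude of the cross inner product} $|\braket{\psi_x^t}{\psi_y^t}|$ between the two actual runs (no sanitized reference trajectory), and bounds its per-step decrease exactly:
\[
\left|\braket{\psi_x^t}{\psi_y^t}\right| - \left|\braket{\psi_x^{t+1}}{\psi_y^{t+1}}\right| \leq \left|\bra{\psi_x^t}\rbra{I - O_x^\dagger O_y}\ket{\psi_y^t}\right| = 2\left|\bra{\psi_x^t}\Pi_B\ket{\psi_y^t}\right| \leq 2\sqrt{p_t\,p_t^B} \leq p_t + p_t^B,
\]
using Cauchy--Schwarz across the two runs and then AM--GM, and then telescopes. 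The structural point you are missing is that coupling the two runs through their mutual inner product makes the per-step cost a \emph{product} $\sqrt{p_t\,p_t^B}$; AM--GM then converts this geometric mean into the desired arithmetic mean $p_t + p_t^B$ with no loss in $T$. In your scheme the two runs are each compared to a shared reference, so each step contributes a \emph{single} factor $\sqrt{p_t}$ (or $\sqrt{p_t^B}$), and squaring termwise cannot be done for free --- hence the unavoidable $\sqrt{T}$ penalty. So the missing idea is precisely to compare the two runs directly to each other via the inner-product potential, not to a common projected trajectory.
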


    We find this lemma independently interesting and are confident that it will find use in future research, given the use of such statements in showing quantum lower bounds via the adversary method, for example (see~\cite[Chapters 11-12]{Wol23} and the references therein). Note that this lemma is very amenable to our ``strong'' sabotage complexity setup: imagine running an algorithm simultaneously on inputs $x$ and $x_B$ that have different function values. Lemma~\ref{lem: hybrid new} says that on stopping at a random time in the algorithm, and choosing one of $x$ and $x_B$ at random, the probability of seeing an index in $B$ (i.e., a $*$-index or a $\dagger$-index) is a constant.
    Indeed, this lemma is a natural quantum generalization of the phenomenon that occurs in the randomized setting: a randomized algorithm that distinguishes $x$ and $x_B$ must read an index in $B$ with constant probability (on input either $x$ or $x_B$).

    We now discuss our remaining results. Given Theorem~\ref{thm: main 1}, it is natural to ask if $\QS_\strong^\ind(\cdot)$ is polynomially related to $\Q(\cdot)$. Using the positive-weighted adversary lower bound for quantum query complexity~\cite{Amb06}, we are able to show that $\QS_\strong(f) = \Omega(\sqrt{\fbs(f)})$ for all total $f$ (and hence the same lower bound holds for $\QS_\weak(f)$, and $\QS_\weak^\ind(f)$, and $\QS_\strong^\ind(f)$ as well).

    \begin{theorem}\label{thm: qsab bs lower bound}
        Let $f : \zone^n \to \zone$ be a Boolean function. Then
        \begin{align*}
            \QS_\strong(f) = \Omega(\sqrt{\fbs(f)}).
        \end{align*}
    \end{theorem}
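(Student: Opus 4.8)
The plan is to invoke the positive-weighted adversary method for the function $g := f_\sab$ in the strong input model, which we regard as a (partial) function on length-$3n$ strings over the alphabet $\zone \cup \set{*,\dagger}$, with three consecutive blocks of $n$ coordinates holding $z$, $x$, and $y$. Recall that the spectral form of this bound states $\Q(g) = \Omega\big(\norm{\Gamma}/\max_i \norm{\Gamma \circ D_i}\big)$ for any symmetric, entrywise-nonnegative matrix $\Gamma$ indexed by inputs of $g$ and supported only on pairs of inputs with different $g$-value; here $D_i$ is the $0/1$ matrix with a $1$ in position $(u,v)$ exactly when $u$ and $v$ differ in coordinate $i$, and $\circ$ denotes the entrywise product. (This holds over arbitrary finite alphabets.) The construction below is a ``sabotage'' version of the standard proof that $\Q(f) = \Omega(\sqrt{\fbs(f)})$.

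First I would fix $z^* \in \zone^n$ with $\fbs_{z^*}(f) = \fbs(f)$ and, by symmetry, assume $f(z^*) = 0$; the case $f(z^*) = 1$ is handled identically, swapping the roles of the $x$- and $y$-registers below. Let $\set{(B_j, w_j)}_{j=1}^m$ be an optimal solution of the linear program defining $\fbs_{z^*}(f)$: each $B_j \subseteq [n]$ is sensitive at $z^*$, each $w_j \ge 0$, $\sum_j w_j = \fbs(f)$, and $\sum_{j\,:\,k\in B_j} w_j \le 1$ for every $k \in [n]$. Writing $e_S$ for the indicator vector of $S \subseteq [n]$, put $y_j := z^* \oplus e_{B_j}$, so $f(y_j) = 1$. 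Then, with $a_j := \sabStar{z^*}{y_j}$ and $b_j := \sabDagger{z^*}{y_j}$, the triple $(a_j, z^*, y_j)$ is a $0$-input of $g$, the triple $(b_j, z^*, y_j)$ is a $1$-input of $g$, and these two inputs differ in precisely the $B_j$-coordinates of the $z$-register (where $a_j$ carries $*$ and $b_j$ carries $\dagger$), while their $x$- and $y$-registers coincide.

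I would take $\Gamma$ indexed by $\set{(a_j,z^*,y_j)}_j \cup \set{(b_j,z^*,y_j)}_j$, with $\Gamma\big[(a_j,z^*,y_j),(b_{j'},z^*,y_{j'})\big] := \sqrt{w_j w_{j'}}$ together with the symmetric entries and all other entries zero. The crucial point is to connect \emph{every} $0$-input to \emph{every} $1$-input by this rank-one weighting: relating only the matched pairs $j=j'$ would leave $\Gamma$ block-diagonal of norm only $\max_j\sqrt{w_j}$, far too small. With $u := (\sqrt{w_1},\dots,\sqrt{w_m})^\top$, the off-diagonal block of $\Gamma$ is $uu^\top$, and since the nonzero eigenvalues of $\Gamma$ are $\pm$ the singular values of this block, $\norm{\Gamma} = \norm{u}^2 = \sum_j w_j = \fbs(f)$.

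It remains to bound $\norm{\Gamma \circ D_i}$ for every coordinate $i$ among the $3n$. Coordinates of the $x$-register contribute $0$, since all our inputs have $x$-register $z^*$. Let $i$ be the coordinate of the $z$-register corresponding to some $k \in [n]$; two of our inputs differ in coordinate $i$ iff $k \in B_j \cup B_{j'}$, so the off-diagonal block of $\Gamma \circ D_i$ is $uu^\top - \widetilde u\widetilde u^\top$, where $\widetilde u_j := \sqrt{w_j}$ if $k \notin B_j$ and $\widetilde u_j := 0$ otherwise. Writing $u = \widetilde u + \delta$ (so $\delta_j = \sqrt{w_j}$ for $k \in B_j$ and $0$ otherwise), the vectors $\widetilde u, \delta$ have disjoint supports, hence $\norm{\widetilde u}^2 \le \norm{u}^2 = \fbs(f)$ and $\norm{\delta}^2 = \sum_{j\,:\,k\in B_j} w_j \le 1$ by the packing constraint; since $uu^\top - \widetilde u\widetilde u^\top = \widetilde u\delta^\top + \delta\widetilde u^\top + \delta\delta^\top$, its norm is at most $\norm{\widetilde u}\norm{\delta} + \norm{\delta}^2 = O(\sqrt{\fbs(f)})$. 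For a coordinate $i$ of the $y$-register corresponding to $k \in [n]$, two inputs differ there iff $k \in B_j \triangle B_{j'}$, and a parallel computation gives off-diagonal block $\widetilde u\delta^\top + \delta\widetilde u^\top$, of norm $\norm{\widetilde u}\norm{\delta} \le \sqrt{\fbs(f)}$. Thus $\norm{\Gamma}/\max_i \norm{\Gamma\circ D_i} = \Omega(\sqrt{\fbs(f)})$, and the adversary bound yields $\QS_\strong(f) = \Q(g) = \Omega(\sqrt{\fbs(f)})$. The only genuine obstacle is choosing $\Gamma$: the obvious matching relation between $a_j$ and $b_j$ has the right per-coordinate norms but hopelessly small operator norm, so one needs the rank-one all-pairs weighting, after which the per-coordinate bounds are immediate once one notices that the block overlaps at a fixed coordinate are controlled by exactly the $\fbs$ packing constraint $\sum_{j\,:\,k\in B_j} w_j \le 1$.
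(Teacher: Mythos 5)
Your proof is correct and takes essentially the same route as the paper's: both fix a maximizer $x$ of $\fbs$, build a rank-one all-pairs adversary matrix with entries $\sqrt{w_y w_{y'}}$ between $*$-sabotaged and $\dagger$-sabotaged pairs (rather than a block-diagonal matching, which you rightly note would be far too weak), get $\norm{\Gamma}=\fbs(f)$, and then use the packing constraint $\sum_{y:\,x_j\neq y_j} w_y \le 1$ to bound each $\norm{\Gamma\circ\Delta_j}$ by $O(\sqrt{\fbs(f)})$. The only cosmetic differences are that you model the strong oracle as $3n$ single-symbol coordinates rather than $n$ triple-valued coordinates (these differ by at most a constant factor in query complexity), and that you bound the per-coordinate norm via the orthogonal decomposition $u=\widetilde u+\delta$ where the paper instead writes the restricted matrix as a $2\times2$ block $\begin{bmatrix} A & B\\ B^T & 0\end{bmatrix}$; the two computations are equivalent.
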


    Fractional block sensitivity is further lower bounded by block sensitivity, which is known to be polynomially related to $\R(\cdot)$ and $\Q(\cdot)$~\cite{Nisan91, BBCMW01}. Using the best-known relationship of $\Q(f) = \widetilde{O}(\bs^3(f))$~\cite{ABK16}, Theorem~\ref{thm: main 1} and Theorem~\ref{thm: qsab bs lower bound} implies the following polynomial relationship between $\QS_\strong(f)$, $\QS_\strong^\ind(f)$ and $\Q(f)$ for all total Boolean functions $f$.
    \begin{align}
        \QS_\strong(f) = O(\Q(f)),& \qquad \Q(f) = \widetilde{O}(\QS_\strong(f)^6). \\
        \QS_\strong^\ind(f) = O(\Q(f)^{1.5}),& \qquad \Q(f) = \widetilde{O}(\QS_\strong^\ind(f)^6).
    \end{align}
    In the weakest input model, we have $\QS_{\weak}(f) = O(\QS_\weak^\ind(f)) = O(\R(f_\sab)) = O(\R(f)) = O(\Q(f)^4)$ (where the last inequality follows from~\cite{ABKRT21}). Thus, in the weakest input model for sabotage complexity, Theorem~\ref{thm: qsab bs lower bound} implies the following polynomial relationship with $\Q(f)$:
    \begin{align}
        \QS_\weak(f) = O(\Q(f)^{4}),& \qquad \Q(f) = \widetilde{O}(\QS_\weak(f)^6). \\
        \QS_\weak^\ind(f) = O(\Q(f)^{4}),& \qquad \Q(f) = \widetilde{O}(\QS_\weak^\ind(f)^6).
    \end{align}
    It would be interesting to find the correct polynomial relationships between all of these measures. In particular, we suspect that $\QS_\strong^\ind(f) = O(\Q(f))$ for all total $f$, but we are unable to show this.

    As mentioned in the discussion before Theorem~\ref{thm: main 1}, it is easy to show that $\QS_\strong^\ind(f) = O(\sqrt{n})$ for all $f : \zone^n \to \zone$. Thus, the lower bound in terms of block sensitivity given by Theorem~\ref{thm: qsab bs lower bound} is actually tight for standard functions like And, Or, Parity and Majority. Given this, it is natural to ask if $\QS_\strong^\ind(f) = \Theta(\sqrt{\fbs(f)})$ for all total Boolean $f$. We show that this is false, by showing that for the Indexing function $\IND_n : \zone^{n + 2^n} \to \zone$ defined by $\IND_n(x, y) = y_{\bin(x)}$ (where $\bin(x)$ denotes the integer in $[2^n]$ with the binary representation $x$), $\QS_\strong^\ind(\IND_n) = \Theta(\fbs(\IND_n)) = \Theta(n)$.

    \begin{theorem}\label{thm: ind lower bound}
        Let $n$ be a positive integer. Then,
        \begin{align*}
            \QS_\strong^\ind(\IND_n) = \Theta(n).
        \end{align*}
    \end{theorem}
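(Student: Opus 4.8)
The plan is to prove the upper and lower bounds separately; the lower bound is the substantive half.

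\emph{Upper bound.} I would give a deterministic $2n$-query algorithm in the strong model. Write an input of $\IND_n$ as an address block $u\in\zone^n$ followed by a data block $v\in\zone^{2^n}$, and let the sabotaged input $z$ arise from $(a,b)\in\IND_n^{-1}(0)$ and $(a',b')\in\IND_n^{-1}(1)$. The algorithm reads the $n$ address bits of the register holding $(a,b)$ and the $n$ address bits of the register holding $(a',b')$. If $a\neq a'$, it outputs any $i\in[n]$ with $a_i\neq a'_i$; this is a differing coordinate of $z$. If $a=a'$, then $b_{\bin(a)}=\IND_n(a,b)=0\neq 1=\IND_n(a',b')=b'_{\bin(a')}=b'_{\bin(a)}$, so the data coordinate $\bin(a)$ is a differing coordinate; the algorithm computes $\bin(a)$ from the address bits it has read and outputs it. This is always correct, so $\QS_\strong^\ind(\IND_n)=O(n)$.

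\emph{Lower bound: reduction to a clean oracle problem.} For the matching $\Omega(n)$ bound I would restrict to a hard sub-family. For each $a\in\zone^n$ set $x^{(a)}:=(a,0^{2^n})\in\IND_n^{-1}(0)$, $y^{(a)}:=(a,e_{\bin(a)})\in\IND_n^{-1}(1)$ (with $e_j$ the $j$-th standard basis vector of $\zone^{2^n}$), and $z^{(a)}:=\sabStar{x^{(a)}}{y^{(a)}}$. Since $x^{(a)}$ and $y^{(a)}$ agree on the address block and on all of the data block except coordinate $\bin(a)$, the input $z^{(a)}$ has exactly one non-Boolean symbol, at data coordinate $\bin(a)$; hence an algorithm for $\QS_\strong^\ind(\IND_n)$ must output that coordinate, and so determine $a$. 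On this sub-family every strong-model query collapses: a query to an address position $p$ returns $a_p$ in any register; a query to data position $j$ of the register holding $y^{(a)}$ (or of the one holding $z^{(a)}$, after relabelling $*\mapsto 1$) returns $\mathbf{1}[\bin(a)=j]$; and a query to a data position of the register holding $x^{(a)}$ is constantly $0$. So $\QS_\strong^\ind(\IND_n)$ is at least the quantum query complexity of the problem $\mathcal P$: recover $\bin(a)$ given an oracle that answers bit-queries ``$a_p$?'' and equality-queries ``$\bin(a)=j$?''.

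\emph{Lower bound: adversary bound for $\mathcal P$.} This is the step I expect to need the most care. The naive all-ones adversary matrix only yields $\Omega(1)$: an equality-query removes a whole star of input-pairs and a bit-query splits the Boolean cube into two halves, both large relative to the all-ones norm. My fix is to weight only pairs at Hamming distance $1$ or $2$: let $\Gamma$ be the adjacency matrix of the Cayley graph $G$ of $\zone^n$ with connection set $\{s:1\le |s|\le 2\}$. Then $G$ is connected and $d$-regular with $d=n+\binom{n}{2}=\Theta(n^2)$, so $\|\Gamma\|=d$. For a bit-position $p$, $\Gamma\circ\Delta_p$ is the adjacency matrix of the subgraph of edges whose direction contains $p$; each vertex has $1+(n-1)=n$ such edges, so this subgraph is $n$-regular and $\|\Gamma\circ\Delta_p\|\le n$. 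For an equality-position $n+j$, the value $\mathbf{1}[\bin(a)=j]$ distinguishes a pair only if one of its members is $\bin^{-1}(j)$, so $\Gamma\circ\Delta_{n+j}$ is a star with $d$ leaves and $\|\Gamma\circ\Delta_{n+j}\|=\sqrt d$. Since $\sqrt d\le n$, the positive-weight adversary method~\cite{Amb06} gives
\begin{align*}
\Q(\mathcal P)=\Omega\!\left(\frac{\|\Gamma\|}{\max_p\|\Gamma\circ\Delta_p\|}\right)=\Omega\!\left(\frac{d}{n}\right)=\Omega(n),
\end{align*}
and therefore $\QS_\strong^\ind(\IND_n)=\Omega(n)$. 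The only unusual feature is that the output set of $\mathcal P$ is large, which is exactly what lets the adversary ratio reach $\Omega(n)$ even though $\mathcal P$ has certificates of size $1$; I would make sure to invoke the adversary bound in a form that allows non-Boolean output, which it does.
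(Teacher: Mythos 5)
Your proof is correct, and it takes a genuinely different route from the paper's. Here is how they compare.

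For the lower bound, the paper works directly with the sabotage function $\IND_{n,\sab}^\strong$: it fixes a bipartite hard relation $R$ pairing $*$-inputs with $\dagger$-inputs, each having all-zero data except a single non-Boolean symbol, and with the two address strings at Hamming distance exactly $2$; it then applies Ambainis's relational adversary method (\cite[Theorem~6.1]{Amb02}, the $m_X m_Y/\ell_{\max}$ form). You instead restrict to a sub-family (all data-zero $*$-inputs), observe that on it the strong oracle collapses to bit-queries on the address and equality-queries ``$\bin(a)=j$?'', and then apply the spectral positive-weight adversary bound (Definition~\ref{def:adv}) with a Cayley-graph weight matrix over $\zone^n$. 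Your reduction is clean and isolates the combinatorial core of the problem nicely; the Cayley-graph adversary matrix calculation (regular graph norm $d=n+\binom{n}{2}$, bit-position subgraphs $n$-regular, equality-position subgraphs are $d$-stars with norm $\sqrt d$) all checks out, and $d/n=\Omega(n)$. The paper's relational adversary reaches the same $\Omega(n)$ conclusion via $m_X m_Y / \ell_{\max} = \binom{n}{2}^2 / \Theta(n^2)$.

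There is one meaningful gap in coverage worth flagging: your sub-family consists entirely of $*$-inputs, so on it the decision variant $f_\sab^{\strong}$ is constant, and your argument therefore only lower bounds the index relation $\QS_\strong^\ind$. That is exactly what the stated theorem requires, so your proof is complete for Theorem~\ref{thm: ind lower bound}. However, the paper's bipartite relation (mixing $*$- and $\dagger$-inputs) simultaneously yields the stronger statement $\QS_\strong(\IND_n)=\Omega(n)$ (Corollary~\ref{thm:sabindStrong}), which the paper then uses to conclude both $\QS_\strong^\ind(\IND_n)$ and $\QS_\weak^\ind(\IND_n)$ are $\Omega(n)$. If you want your argument to recover the decision-variant bound as well, you would need to reintroduce the $*$/$\dagger$ distinction into the hard set, e.g.\ by tensoring your Cayley-graph $\Gamma$ with an off-diagonal $2\times2$ block as the paper does in the proof of the $\sqrt{\fbs}$ lower bound. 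Two small nits on the upper bound: in the strong model a single query at position $p$ returns the triple $(x_p,y_p,z_p)$, so $n$ queries (not $2n$) already suffice; and while your deterministic algorithm is a perfectly good direct proof, the paper implicitly obtains $O(n)$ from $\QS_\strong^\ind(f)=O(\RS(f))=O(\D(f))$ and $\D(\IND_n)=n+1$.
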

    In order to show this, we use a variation of Ambainis's basic adversary method~\cite[Theorem~5.1]{Amb02}, also presented in the same paper~\cite[Theorem~6.1]{Amb02} (see Lemma~\ref{lem: ambainis v2}).

    Finally, we summarize the relations we proved in Figure~\ref{fig:relations}.

    \begin{figure}[!ht]
        \centering
        \begin{tikzpicture}[scale=1.2,measure/.style={draw, rounded corners=.5em, blue}]
            \node[measure] (fbs) at (0,0) {$\sqrt{\fbs}$};
            \node[measure] (QSs) at (1,1) {$\QS_{\strong}$};
            \node[measure] (QSsi) at (0,2) {$\QS_{\strong}^{\ind}$};
            \node[measure] (QSw) at (2,2) {$\QS_{\weak}$};
            \node[measure] (QSwi) at (1,3) {$\QS_{\weak}^{\ind}$};
            \node (QD32) at (-1,3) {$\QD^{3/2}$};
            \draw (fbs) to node[rotate=0,right=.2em] {\small Theorem~\ref{thm: qsab bs lower bound}} (QSs);
            \draw[red] (QSs) to (QSsi) to (QSwi) to (1.75,3.75) node[above right] {\color{black}$\sqrt{n}$};
            \draw[red] (QSs) to (QSw) to (QSwi) to (.25,3.75) node[above left] {\color{black}$\RS$};
            \draw (fbs) to node[rotate=0,left=.2em] {\small Lemma~\ref{lem:fbsrs}} (-.75,.75) node[above left] {$\sqrt{\RS}$};
            \draw (QSsi) to node[rotate=0,left=.2em] {\small Theorem~\ref{thm:qsvsqd}} (QD32) to node[rotate=0,left=.4em] {\small\cite[Proposition~6]{BK19}} (-1.75,3.75) node[above left] {$\Q^{3/2}$};
            \draw (QSs) to node[rotate=0,below=.35em] {\small Lemma~\ref{lem:qsvsq}} (3.25,1.5) node[ right] {$\Q$};
        \end{tikzpicture}
        \caption{Overview of the relations proved in this work. If nodes $A$ and $B$ are connected, and $A$ is below $B$, then $A = O(B)$. All the red edges are reasonably straightforward inclusions, and they are proved in Proposition~\ref{prop:relations-quantum-measures}.}
        \label{fig:relations}
    \end{figure}
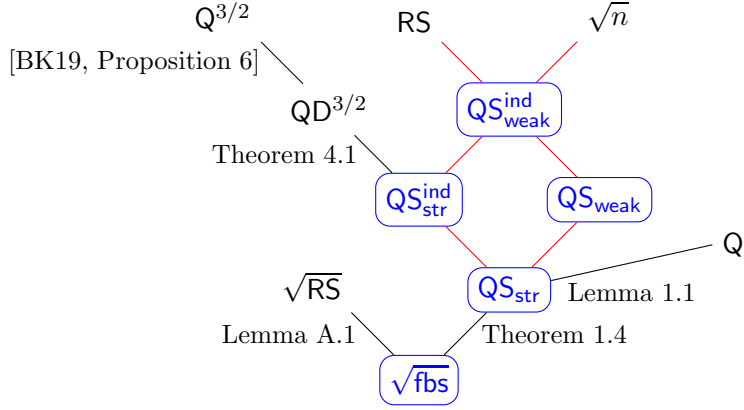

    \section{Preliminaries}

    All logarithms in this paper are taken base 2 unless mentioned otherwise. For a positive integer $n$, we use the notation $[n]$ to denote the set $\cbra{1, 2, \dots, n}$ and use $[n]_0$ to denote $\cbra{0, 1, \dots, n-1}$. Let $v \in \mathbb{C}^d$, then $\norm{v}_2=\sqrt{\sum_{i=1}^{d} |v_i|^2}$. Let $A, B$ be square matrices in $\mathbb{C}^{d \times d}$. We use $A^{\dagger}$ to denote the conjugate transpose of matrix $A$. The operator norm of a matrix $A$, denoted by $\norm{A}$, is the largest singular value of $A$, i.e., $\norm{A}=\max_{v : \norm{v}_2=1} \norm{Av}_2$.
    The trace distance between two matrices $A,B$, denoted by $\norm{A-B}_{\mathsf{tr}}=\frac{1}{2}\norm{A-B}_1$ where $\norm{A}_1=\Tr(\sqrt{A^{\dagger}A})$. For two $d \times d$ matrices $A, B$, $A \circ B$ denotes the Hadamard, or entry-wise, product of $A$ and $B$.

    We refer the reader to \cite[Chapter~1]{Wol23} for the relevant basics of quantum computing.

    Let $D \subseteq \zone^n$, let $R$ be a finite set, and let $f \subseteq D \times R$ be a relation. A quantum query algorithm $\mathcal{A}$ for $f$ begins in a fixed initial state $\ket{\psi_0}$ in a finite-dimensional Hilbert space, applies a sequence of unitaries $U_0, O_x, U_1, O_x, \dots, U_T$, and performs a measurement. Here, the initial state $\ket{\psi_0}$ and the unitaries $U_0, U_1, \dots, U_T$ are independent of the input. The unitary $O_x$ represents the ``query'' operation, and does the following for each basis state:
    it maps $\ket{i}\ket{b}$ to $\ket{i}\ket{b + x_i \mod 2}$ for all $i \in [n]$.
    The algorithm then performs a measurement and outputs the observed value.
    We say that $\mathcal{A}$ is a bounded-error quantum query algorithm computing $f$ if for all $x \in D$ the probability of outputting $r$ such that $(x, r) \in f$ is at least $2/3$. The (bounded-error) \emph{quantum query complexity of $f$}, denoted by $\Q(f)$, is the least number of queries required for a quantum query algorithm to compute $f$ with error probability at most $1/3$. We use $\R(f)$ to denote the \emph{randomized query complexity} of $f$, which is the worst-case cost (number of queries) of the best randomized algorithm that computes $f$ with error probability at most $1/3$ on all inputs.

    We recall some known complexity measures.

    \begin{defi}[Block sensitivity]
        Let $n$ be a positive integer and $f : \zone^n \to \zone$ be a Boolean function. For any $x \in \zone^n$, a block $B \subseteq [n]$ is said to be sensitive on an input $x \in \zone^n$ if $f(x) \neq f(x \oplus B)$, where $x \oplus B$ (or $x_B$) denotes the string obtained by taking $x$ and flipping all bits in $B$. The \emph{block sensitivity of $f$ on $x$}, denoted $\bs(f, x)$, is the maximum number of pairwise disjoint blocks that are sensitive on $x$. The \emph{block sensitivity of $f$}, denoted by $\bs(f)$, is $\max_{x \in \zone^n} \bs(f,x)$.
    \end{defi}

    The \emph{fractional block sensitivity} of $f$ on $x$, denoted $\fbs(f, x)$, is the optimum value of the linear program below. We refer the reader to~\cite{GSS16, KT16} for a formal treatment of fractional block sensitivity and related measures, and we simply state its definition here.

    \begin{defi}[Fractional block sensitivity]
        \label{def:fbs}
        Let $n$ be a positive integer and $f : \zone^n \to \zone$ be a Boolean function. Let $x \in \zone^n$, and let $X,Y \subseteq \zone^n$ be the set of all inputs $z$ whose function values satisfy $f(x) = f(z)$, resp.\ $f(x) \neq f(z)$. The fractional block sensitivity of $f$ on $x$, denoted by $\fbs(f,x)$, is the optimal value of the following optimization program.
        \begin{align*}
            \max \quad& \sum_{y \in Y} w_y, \\
            \textnormal{subject to}\quad & \sum_{\substack{y \in Y \\ x_j \neq y_j}} w_y \leq 1, & \forall j \in [n], \\
            & w_y \geq 0, & \forall y \in Y.
        \end{align*}
        The fractional block sensitivity of $f$, denoted by $\fbs(f)$, is $\max_{x \in \zone^n} \fbs(f,x)$.\footnote{The block sensitivity of $f$ on $x$ is captured by the integral version of this linear program, where the variable $w_y \in \{0,1\}$ enforces that the blocks must be disjoint.}
    \end{defi}

    We also state the non-negative weight adversary bound. It appears in many forms in the literature. We refer to the form mentioned in \cite{HLS07} which is equivalent to the following definition.

    \begin{defi}[Non-negative weight adversary bound]
        \label{def:adv}
        Let $n$ be a positive integer, $\domainSet$ and $\rangeSet$ finite sets, $D \subseteq (\domainSet)^n$, and $f : D \to \rangeSet$. The adversary bound is the following optimization program.
        \begin{align*}
            \max\quad & \norm{\Gamma} \\
            \text{s.t.}\quad & \norm{\Gamma \circ \Delta_j} \leq 1, & \forall j \in [n], \\
            & \Gamma[x,y] = 0, & \text{if } f(x) = f(y).
        \end{align*}
        Here, the optimization is over all symmetric, entry-wise non-negative adversary matrices $\Gamma \in \mathbb{R}^{D \times D}$. The matrix $\Delta_j \in \zone^{D \times D}$ has entries $\Delta_j[x,y] = 1$ if and only if $x_j \neq y_j$. The optimal value of this optimization program is denoted by $\ADV^+(f)$.
    \end{defi}

    Sometimes, the optimal value of the non-negative weight adversary bound is also written as $\ADV(f)$. However, one can also consider the general adversary bound, in which the entries of the matrix $\Gamma$ are not constrained to be non-negative. To clearly differentiate between the optimal values of these optimization programs, we distinguish between them by explicitly writing $\ADV^+(f)$ and $\ADV^{\pm}(f)$.

    \begin{defi}[Quantum distinguishing complexity]
        Let $n$ be a positive integer. The quantum distinguishing complexity of a (partial) Boolean function $f : D \rightarrow \zone$ (where $D \subseteq \zone^n$), denoted by $\QD(f)$, is the smallest integer $k$ such that there exists a $k$-query algorithm that on input $x \in D$ outputs a quantum state $\rho_x$ such that
        \begin{align*}
            \forall x,y \in D, \quad \norm{\rho_x -\rho_y}_{\tr} \geq 1/6,
        \end{align*}
        whenever $f(x)\neq f(y)$.
    \end{defi}

    The non-negative weight adversary bound is known to be a lower bound to the quantum distinguishing complexity, which in turn is a lower bound on the quantum query complexity, by \cite[Proposition~6]{BK19}. We state it below.

    \begin{theorem}[\cite{BK19}]\label{thm:advplus lower bound}
        Let $n$ be a positive integer, $D \subseteq \zone^n$, and $f : D \to \zone$. Then,
        \[\ADV^+(f) = O(\QD(f)) = O(\Q(f))).\]
    \end{theorem}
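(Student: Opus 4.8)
The statement factors as $\ADV^+(f) = O(\QD(f))$ together with $\QD(f) = O(\Q(f))$, and the plan is to prove these two inequalities separately. The second is essentially immediate. The first is the classical positive-weight adversary argument~\cite{Amb02,HLS07}, run against a $\QD$-algorithm in place of a bounded-error algorithm that computes $f$.

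For $\QD(f) = O(\Q(f))$ I would take an optimal bounded-error quantum query algorithm for $f$ and declare its output state $\rho_x$ to be the reduced state of its decision qubit after $\Q(f)$ queries. Whenever $f(x) \neq f(y)$, this algorithm accepts $x$ and $y$ with probabilities that differ by at least $1/3$, so the computational-basis measurements of $\rho_x$ and of $\rho_y$ have total variation distance at least $1/3$; since trace distance upper-bounds the total variation distance induced by any fixed measurement, $\norm{\rho_x - \rho_y}_{\tr} \geq 1/3 \geq 1/6$, so this is a legal $\QD$-algorithm, giving $\QD(f) \leq \Q(f)$.

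For $\ADV^+(f) = O(\QD(f))$, fix an optimal adversary matrix $\Gamma$ as in Definition~\ref{def:adv}: symmetric, entrywise non-negative, with $\norm{\Gamma \circ \Delta_j} \leq 1$ for every $j \in [n]$, with $\Gamma[x,y] = 0$ whenever $f(x) = f(y)$, and with $\norm{\Gamma} = \ADV^+(f)$. By Perron--Frobenius, $\Gamma$ has a non-negative unit eigenvector $\delta$ with eigenvalue $\norm{\Gamma}$. Let $\mathcal{A}$ be any $T$-query $\QD$-algorithm for $f$; I may assume it applies input-independent unitaries interleaved with the queries and then outputs a partial trace $\rho_x$ of the resulting pure state, so that on input $x$ it holds, just before the $t$-th query, a pure state $\ket{\psi_x^t}$ in a fixed Hilbert space with $\ket{\psi_x^0}$ independent of $x$. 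I would then track the real-valued progress measure
\[
    W^t \;=\; \sum_{x,y \in D} \Gamma[x,y]\,\delta_x\delta_y\,\braket{\psi_x^t}{\psi_y^t}
\]
through the three standard estimates. First, $W^0 = \sum_{x,y}\Gamma[x,y]\delta_x\delta_y = \delta^{\top}\Gamma\delta = \norm{\Gamma} = \ADV^+(f)$, since $\ket{\psi_x^0}$ is input-independent. Second, input-independent unitaries do not change any inner product, while a standard per-query computation --- decompose $\ket{\psi_x^t} = \sum_i \ket{i} \otimes \ket{\phi_{x,i}^t}$ by the queried index, use phase kickback to get $\braket{\psi_x^{t+1}}{\psi_y^{t+1}} - \braket{\psi_x^t}{\psi_y^t} = -2\sum_{i : x_i \neq y_i}\braket{\phi_{x,i}^t}{\phi_{y,i}^t}$, and bound the resulting bilinear forms using $\norm{\Gamma \circ \Delta_i} \leq 1$ and $\sum_i \norm{\phi_{x,i}^t}^2 = 1$ --- gives $|W^{t+1} - W^t| \leq 2$. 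Third, $\Gamma[x,y] \neq 0$ forces $f(x) \neq f(y)$, so $\norm{\rho_x - \rho_y}_{\tr} \geq 1/6$; since trace distance is non-increasing under partial trace and equals $\sqrt{1 - |\braket{\psi_x^T}{\psi_y^T}|^2}$ for pure states, this gives $|\braket{\psi_x^T}{\psi_y^T}| \leq \sqrt{1 - 1/36} =: c < 1$, hence $|W^T| \leq c\sum_{x,y}\Gamma[x,y]\delta_x\delta_y = c\norm{\Gamma}$. Combining the three, $(1-c)\norm{\Gamma} \leq |W^0| - |W^T| \leq |W^0 - W^T| \leq 2T$, so $T = \Omega(\norm{\Gamma}) = \Omega(\ADV^+(f))$; as $\mathcal{A}$ was arbitrary this yields $\QD(f) = \Omega(\ADV^+(f))$, and chaining with the previous paragraph gives $\ADV^+(f) = O(\QD(f)) = O(\Q(f))$.

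The only step that is not pure bookkeeping is the per-query bound. Writing $v_x = \delta_x\ket{\phi_{x,i}^t}$, one must see that $\bigl|\sum_{x,y}(\Gamma \circ \Delta_i)[x,y]\braket{v_x}{v_y}\bigr| \leq \norm{\Gamma \circ \Delta_i}\sum_x \norm{v_x}^2$ --- this is the fact that $\norm{M \otimes I} = \norm{M}$ --- and then sum over $i$ using $\sum_i \sum_x \delta_x^2\norm{\phi_{x,i}^t}^2 = \sum_x \delta_x^2 = 1$; these observations together with $\norm{\Gamma \circ \Delta_i} \leq 1$ for all $i$ are exactly what turn the $i$-wise bounds into a global constant. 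Everything else goes through verbatim; the one conceptual point worth stressing is that the adversary progress argument never uses that $\mathcal{A}$ decides $f$ --- it only uses that the final states lie a constant trace distance apart on pairs with differing function value, which is precisely the defining property of a $\QD$-algorithm, so the classical proof is recycled without change.
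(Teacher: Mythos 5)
The paper does not prove this statement; it cites \cite[Proposition~6]{BK19} and states the result. So what you have done is supply a from-scratch proof of a cited fact, and your proof is correct. You decompose the claim into $\QD(f) \le \Q(f)$ (essentially by definition, via the fact that trace distance dominates the total-variation distance of the output measurement, giving $\norm{\rho_x - \rho_y}_{\tr} \ge 1/3$ whenever $f(x) \ne f(y)$) and the adversary-style progress argument $\ADV^+(f) = O(\QD(f))$. The latter is the standard Ambainis/H{\o}yer--Lee--{\v S}palek argument with the Perron--Frobenius eigenvector $\delta$ and progress measure $W^t = \sum_{x,y}\Gamma[x,y]\delta_x\delta_y\braket{\psi_x^t}{\psi_y^t}$; the per-query estimate $|W^{t+1}-W^t|\le 2$ via the $\norm{\Gamma\circ\Delta_i}\le 1$ constraints is correct, and the endgame correctly converts the $\QD$ guarantee into $|\braket{\psi_x^T}{\psi_y^T}| \le \sqrt{1-1/36}$ using that trace distance is contractive under partial trace and equals $\sqrt{1-|\braket{\psi}{\phi}|^2}$ on pure states. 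Your observation at the end --- that the adversary argument never uses that the algorithm computes $f$, only that the final states are a constant trace distance apart on differing pairs, so it lower-bounds $\QD$ verbatim --- is exactly the right conceptual point, and it is precisely the observation on which \cite{BK19} rest their Proposition~6.

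A few small points of hygiene worth flagging, none of which are gaps. First, the phase-kickback identity $\braket{\psi_x^{t+1}}{\psi_y^{t+1}}-\braket{\psi_x^t}{\psi_y^t} = -2\sum_{i:x_i\ne y_i}\braket{\phi_{x,i}^t}{\phi_{y,i}^t}$ is exact for the phase oracle; the paper's oracle is the bit-flip oracle $\ket{i}\ket{b}\mapsto\ket{i}\ket{b\oplus x_i}$, so strictly you should either change basis on the target qubit or absorb the target into the work register in the decomposition $\ket{\psi_x^t}=\sum_i\ket{i}\otimes\ket{\phi_{x,i}^t}$; either way the bound $|W^{t+1}-W^t|\le 2$ comes out the same. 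Second, the bound $|W^T|\le c\norm{\Gamma}$ uses crucially that $\Gamma$ is entrywise non-negative and $\delta$ is a non-negative vector so that $\sum_{x,y}\Gamma[x,y]\delta_x\delta_y|\braket{\psi_x^T}{\psi_y^T}|$ really is dominated by $c\sum_{x,y}\Gamma[x,y]\delta_x\delta_y$; you invoke Perron--Frobenius to get $\delta\ge 0$, which is exactly why this argument is confined to $\ADV^+$ and does not extend directly to $\ADV^\pm$. Both of these are understood implicitly in what you wrote, but would be worth spelling out in a final write-up.
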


    The relation $\ADV^+(f) = O(\Q(f))$ holds in the non-Boolean case as well, which follows directly from the definition and known results about the non-negative weight adversary bound, as can be found in \cite{LMRSS11}, for instance.

    \section{Sabotage complexity}
    \label{sec:QuantumSabotageComplexity}

    In this section we first define sabotage variants of a Boolean function $f$ that are convenient to work with. Specifically, these variants are useful because they enable us to work with the usual quantum query complexity model in the quantum setting, allowing us to use known results in this setting. After this, we analyze some basic properties of randomized and quantum sabotage complexities.

    \subsection{Formal setup of sabotage complexity}

    We start by formally defining the sabotage function of $f$.

    \begin{defi}[Sabotage functions and relations]
        \label{def:fsab}
        Let $n$ be a positive integer, $D \subseteq \zone^n$. Let $f : D \to \zone$ be a (partial) Boolean function. For any input pair $x \in f^{-1}(0)$ and $y \in f^{-1}(1)$, we define $\sabStar{x}{y},\sabDagger{x}{y} \in \{0,1,*,\dagger\}^n$ by
        \[\sabStar{x}{y}_j = \begin{cases}
            x_j, & \text{if } x_j = y_j, \\
            *, & \text{otherwise},
        \end{cases} \qquad \text{and} \qquad \sabDagger{x}{y}_j = \begin{cases}
            x_j, & \text{if } x_j = y_j, \\
            \dagger, & \text{otherwise}.
        \end{cases}\]
        We let $S_* = \{\sabStar{x}{y} : x \in f^{-1}(0), y \in f^{-1}(1)\}$, $S_{\dagger} = \{\sabDagger{x}{y} : x \in f^{-1}(0), y \in f^{-1}(1)\}$, and we let $D_{\sab} = S_* \cup S_{\dagger} \subseteq \{0,1,*,\dagger\}^n$.
        \begin{itemize}
            \item The \emph{sabotage function of $f$} is defined as $f_{\sab} : D_{\sab} \to \zone$, where $f_\sab(z) = 1$ iff $z \in S_\dagger$.
            \item The \emph{sabotage relation of $f$} is defined as $f^{\ind}_{\sab} \subseteq D_{\sab} \times [n]$, where $(z, j) \in f_\sab^\ind$ iff $z_j \in \{*,\dagger\}$.
        \end{itemize}

        For every $x \in f^{-1}(0)$ and $y \in f^{-1}(1)$, we let $(x,y,*)$ denote $((x_j,y_j,z_j))_{j=1}^n$, where $z = \sabStar{x}{y}$. Similarly, for every $x \in f^{-1}(0)$ and $y \in f^{-1}(1)$, we let $(x,y,\dagger)$ denote $((x_j,y_j,z_j))_{j=1}^n$, where $z = \sabDagger{x}{y}$. For $b \in \cbra{*, \dagger}$, let $S^{\strong}_b = \{(x,y,b) : x \in f^{-1}(0), y \in f^{-1}(1)\}$, and $D^{\strong}_{\sab} = S^{\strong}_* \cup S^{\strong}_{\dagger}$.
        \begin{itemize}
            \item We define the \emph{strong sabotage function of $f$} as $f_{\sab}^{\strong} : D^{\strong}_{\sab} \to \zone$, where $f_\sab^\strong(w) = 1$ iff $w \in S_\dagger^\strong$.
            \item We define the \emph{strong sabotage relation of $f$} as $f_{\sab}^{\strong, \ind} \subseteq D^{\strong}_{\sab} \times [n]$, where $(w, j) \in f_\sab^{\strong, \ind}$ iff $z_j \in \cbra{*, \dagger}$ where $w = ((x_j,y_j,z_j))_{j=1}^n$.
        \end{itemize}
    \end{defi}

    If we want to compute $f_{\sab}$, we need to consider how we are given access to the input of $f_{\sab}$. To that end, we consider two input models, the weak and the strong input model. Both can be viewed as having regular query access to the inputs of the function $f_{\sab}$ and $f_{\sab}^{\strong}$, respectively.

    \begin{defi}[Weak sabotage input model]
        \label{def:WeakInputModel}
        Let $n$ be a positive integer, $D \subseteq \zone^n$, and $f : D \to \zone$ be a (partial) Boolean function. Let $D_{\sab}$ be as in Definition~\ref{def:fsab}. In the \emph{weak sabotage input model}, on some input $z \in D_{\sab}$, we are given access to an oracle $O^{\weak}_z$ that when queried the $j$th position returns $z_j$. In the quantum setting, this means that the oracle performs the mapping
        \[
        O^{\weak}_z : \ket{j}\ket{b} \mapsto \ket{j}\ket{(b + z_j) \mod 4}, \quad \forall b \in [4]_0, \forall j \in [n],
        \]
        where $*$ is identified with $2$ and $\dagger$ is identified with $3$.
    \end{defi}

    We also consider a stronger model.

    \begin{defi}[Strong sabotage input model]
        \label{def:StrongInputModel}
        Let $n$ be a positive integer, $D \subseteq \zone^n$, and $f : D \to \zone$ be a (partial) Boolean function. Let $D^{\strong}_{\sab}$ be as in Definition~\ref{def:fsab}. In the \emph{strong sabotage input model}, on some input $w = ((x_j,y_j,z_j))_{j=1}^n$,
        we are given access to an oracle $O^{\strong}_w$ that when queried the $j$th position returns the tuple $(x_j, y_j, z_j)$.
        In the quantum setting, this means that the oracle performs the mapping
        \[
        O_w^{\strong} : \ket{j}\ket{b_x}\ket{b_y}\ket{b_z} \mapsto \ket{j}\ket{b_x \oplus x_j}\ket{b_y \oplus y_j}\ket{(b_z + z_j) \mod 4},
        \]
        for all $b_x,b_y \in \zone$, $b_z \in [4]_0$ and for all $j \in [n]$. As in the previous definition, $*$ is identified with $2$ and $\dagger$ is identified with $3$.
    \end{defi}

    Note that in the stronger model, we are implicitly also given the information which of the two inputs $x$ and $y$ are the $0$- and $1$-inputs of $f$. Indeed, we assume that the bits queried in the first entry of the tuple, always correspond to the $0$-input that defined the sabotaged input $z$. We remark here that we can always remove this assumption if we allow for additive overhead of $O(\Q(f))$, after all we can always run a quantum algorithm that computes $f$ on the first bits from all our queried tuple, to compute $f(x)$, and thus finding out whether it was a $0$- or $1$-input to begin with.

    Having defined the sabotage functions/relations and the input models we now define four different notions of quantum sabotage complexity of $f$.

    \begin{defi}[Sabotage complexity]\label{defi:sabotage}
        Let $n$ be a positive integer, $D \subseteq \zone^n$, and let $f : D \to \zone$ be a (partial) Boolean function. Define,
        \begin{align*}
            \QS_{\weak}(f)\coloneqq \Q(f_{\sab}), & \qquad \QS_\weak^\ind(f) \coloneqq \Q(f_\sab^\ind)\\
            \QS_{\strong}(f)\coloneqq \Q(f_{\sab}^\strong), & \qquad \QS_\strong^\ind(f) \coloneqq \Q(f_\sab^{\strong, \ind}).
        \end{align*}
        Analogously, define
        \begin{align*}
            \RS_{\weak}(f)\coloneqq \R(f_{\sab}), & \qquad \RS_\weak^\ind(f) \coloneqq \R(f_\sab^\ind)\\
            \RS_{\strong}(f)\coloneqq \R(f_{\sab}^\strong), & \qquad \RS_\strong^\ind(f) \coloneqq \R(f_\sab^{\strong, \ind}).
        \end{align*}
    \end{defi}

    \subsection{Randomized sabotage complexity}

    By comparing the definitions in Definition~\ref{defi:sabotage} to those in \cite{BK18}, we observe that $\RS = \RS_{\weak}$. However, we also argue that for all functions, the other randomized complexity measures are equal up to constants.

    \begin{proposition}\label{prop: all sab equal randomized}
        Let $n$ be a positive integer, and $D \subseteq \zone^n$. Let $f : D \to \zone$ be a (partial) Boolean function. Then,
        \[\RS(f) \coloneqq \RS_{\weak}(f) = \Theta(\RS_{\strong}(f)) = \Theta(\RS_{\weak}^{\ind}(f)) = \Theta(\RS_{\strong}^{\ind}(f)).\]
    \end{proposition}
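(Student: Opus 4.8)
The plan is to prove the cyclic chain of inequalities
\begin{equation*}
\RS_{\strong}(f) \;\le\; \RS_{\weak}(f) \;\le\; \RS_{\weak}^{\ind}(f) + 1 \;\le\; \RS_{\strong}^{\ind}(f) + 1 \;\le\; 3\,\RS_{\strong}(f) + 1,
\end{equation*}
which forces all four measures to agree up to constant factors as soon as they are all at least $1$. If $f^{-1}(0)$ or $f^{-1}(1)$ is empty then $D_{\sab} = \emptyset$ and the statement is vacuous, so we may assume both are nonempty; note also that a $0$-input and a $1$-input always differ somewhere, so every sabotaged input has at least one $*$/$\dagger$-coordinate. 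Three of the four links are direct simulations, and one genuinely uses the ``an algorithm must read a $*$/$\dagger$-index'' phenomenon.

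For $\RS_{\strong}(f) \le \RS_{\weak}(f)$ I would run any algorithm for $f_{\sab}$ inside the strong input model, answering each of its queries with only the last coordinate $z_j$ of the triple $(x_j, y_j, z_j)$ returned by a strong query; since $f_{\sab}^{\strong}(w) = f_{\sab}(z)$, both correctness and the query count carry over. For $\RS_{\weak}(f) \le \RS_{\weak}^{\ind}(f) + 1$ I would run an algorithm for $f_{\sab}^{\ind}$ to obtain a claimed $*$/$\dagger$-index $j$, make one extra weak query to position $j$, and output $0$ if $z_j = *$ and $1$ if $z_j = \dagger$ (and anything if $z_j \in \zone$, which happens only when the subroutine errs); the success probability only drops by the subroutine's error.

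The link $\RS_{\weak}^{\ind}(f) \le \RS_{\strong}^{\ind}(f)$ needs a small observation: a weak query reveals $z_j$ but not $x_j,y_j$ separately, yet this loss only matters when $z_j \in \{*,\dagger\}$, in which case we are already done. Concretely, given an algorithm $\mathcal{B}$ for $f_{\sab}^{\strong,\ind}$, I would simulate it in the weak model: when $\mathcal{B}$ queries position $j$, make a weak query; if $z_j \in \{*,\dagger\}$, halt and output $j$; otherwise $x_j = y_j = z_j$, so feed $\mathcal{B}$ the triple $(z_j, z_j, z_j)$ and continue; if $\mathcal{B}$ halts with index $j'$, output $j'$. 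Coupling the two algorithms' random strings, on any run where we do not halt early every query of $\mathcal{B}$ is answered exactly as in an honest execution on the underlying strong input, so our output equals $\mathcal{B}$'s; hence we are correct whenever $\mathcal{B}$ is, using no more queries.

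The remaining link $\RS_{\strong}^{\ind}(f) \le 3\,\RS_{\strong}(f)$ is the crux. Let $\mathcal{A}$ be an algorithm for $f_{\sab}^{\strong}$ using $T = \RS_{\strong}(f)$ queries. I would run $\mathcal{A}$ and, the moment it queries a position $j$ with $x_j \neq y_j$ (detectable from the returned triple, since then $z_j \in \{*,\dagger\}$), halt and output $j$; repeat from scratch three times, outputting arbitrarily if no run ever hits such a position. A single run hits a differing position with probability at least $1/3$: fix the underlying pair $(x,y)$ and couple the runs of $\mathcal{A}$ on $(x,y,*)$ and $(x,y,\dagger)$ with a common random string; the two runs are identical until the first query to a position where $x_j \neq y_j$ (all other coordinates of the two inputs coincide), so if no such query is ever made the two outputs are identical, contradicting that $\mathcal{A}$ outputs $0$ on $(x,y,*)$ and $1$ on $(x,y,\dagger)$ each with probability at least $2/3$. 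Three independent runs thus all miss with probability less than $1/3$, and the total cost is $3T = O(\RS_{\strong}(f))$. The only steps needing care are this coupling argument and the early-halt bookkeeping in the previous paragraph; everything else is routine.
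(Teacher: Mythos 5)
Your proof is correct and rests on the same three ideas as the paper's --- trivial simulation of weak queries by strong ones, the observation that a strong query at a non-sabotaged coordinate reveals only $(z_j,z_j,z_j)$, and the coupling argument that a distinguishing algorithm must read a $*/\dagger$-position with constant probability --- but you package them as a single cyclic chain of four inequalities rather than the paper's six pairwise bounds. This buys a small economy: only your final link, $\RS_{\strong}^{\ind}(f)\le 3\,\RS_{\strong}(f)+1$, needs the halt-on-$*/\dagger$-and-repeat argument, whereas the paper invokes it twice (once to pass from the strong to the weak input model, once to upgrade a decision algorithm to an index algorithm). You also make explicit a slightly sharper fact: $\RS_{\weak}^{\ind}(f)\le\RS_{\strong}^{\ind}(f)$ holds with \emph{no} constant-factor loss, since the weak simulator of a strong-model index algorithm either answers a query faithfully (because $z_j\in\zone$ forces $x_j=y_j=z_j$) or halts with a provably correct index the instant it sees a $*$ or $\dagger$; combined with the trivial reverse inclusion this gives $\RS_{\weak}^{\ind}(f)=\RS_{\strong}^{\ind}(f)$ exactly, which the paper states only up to constants. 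Your bookkeeping (the explicit $+1$ for the index-to-decision step, the explicit three repetitions with $(2/3)^3<1/3$, and flagging the degenerate case $D_{\sab}=\emptyset$) is also tidier than the paper's, which declares itself informal about constants. The differences are organizational rather than conceptual; the proofs are morally the same.
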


    \begin{proof}
        It is clear that we have $\RS_{\strong}(f) \leq \RS_{\weak}(f)$, and $\RS_{\strong}^{\ind}(f) \leq \RS_{\weak}^{\ind}(f)$ because the input model is strictly stronger. For the opposite directions, suppose that we have an algorithm $\mathcal{A}$ in the strong input model. Let $x \in f^{-1}(0)$ and $y \in f^{-1}(1)$. Note that for every $j \in [n]$ on queries where $x_j = y_j$, the oracle outputs $(0,0,0)$ or $(1,1,1)$. Thus, on these indices, the same algorithm in the weak model would yield the outputs $0$ or $1$, respectively. Moreover, with high probability $\mathcal{A}$ must query a $j \in [n]$ where $x_j \neq y_j$ at some point, since otherwise it cannot distinguish $\sabStar{x}{y}$ and $\sabDagger{x}{y}$. This gives us the following $\RS_\weak^\ind$ algorithm: run $\mathcal{A}$ constantly many times (with strong queries replaced by weak queries) until we query a $j \in [n]$ where $x_j \neq y_j$. At that point, we can interrupt the algorithm and we have enough information to solve the sabotage problem, in both the decision and index model. Thus $\RS_{\weak}(f) = O(\RS_{\strong}(f))$, and $\RS_{\weak}^{\ind}(f) = O(\RS_{\strong}^{\ind}(f))$.

        It remains to show that $\RS_{\strong}(f) = \Theta(\RS_{\strong}^{\ind}(f))$. By the definitions of the models, we have $\RS_{\strong}(f) \leq \RS_{\strong}^{\ind}(f)$. On the other hand, suppose that we have an algorithm $\mathcal{B}$ that figures out whether we have a $*$- or $\dagger$-input. Then, by the same logic as given in the previous paragraph, with high probability, $\mathcal{B}$ must have encountered at least one $*$ or $\dagger$, so we can read back in the transcript to find out at which position it made that query. Repeating $\mathcal{B}$ a constant number of times this way yields $\RS_{\strong}^{\ind}(f) = O(\RS_{\strong}(f))$.
    \end{proof}
    Note that we did not do a formal analysis of success probabilities in the above argument, but this is not hard to do. We omit precise details for the sake of brevity.

    \subsection{Quantum sabotage complexity}

    In the quantum case, we have not been able to prove equivalences between the different definitions. However, we can still prove some bounds between them. We refer the reader to Figure~\ref{fig:relations} for a pictorial representation of all relationships.

    \begin{proposition}
        \label{prop:relations-quantum-measures}
        Let $n$ be a positive integer, $D \subseteq \zone^n$, and let $f : D \to \zone$ be a (partial) Boolean function. Then,
        \[\QS_{\strong}(f) = O(\QS_{\strong}^{\ind}(f)) = O(\QS_{\weak}^{\ind}(f)) = O(\min\{\RS(f)),\sqrt{n}\}),\]
        and
        \[\QS_{\strong}(f) = O(\QS_{\weak}(f)) = O(\QS_{\weak}^{\ind}(f)).\]
    \end{proposition}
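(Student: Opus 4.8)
The plan is to verify each inequality appearing in the two displayed chains separately; as remarked in the paper none of them is deep, so the proof is a short case analysis, grouped by the kind of reduction involved. First I would handle the two \emph{output-model} reductions $\QS_\strong(f) = O(\QS_\strong^\ind(f))$ and $\QS_\weak(f) = O(\QS_\weak^\ind(f))$. In both cases, take an algorithm that with probability at least $2/3$ outputs an index $j$ with $z_j \in \cbra{*,\dagger}$; run it to obtain $j$, make one additional query at position $j$ in the relevant input model to read $z_j$, and output $0$ if $z_j = *$ and $1$ if $z_j = \dagger$ (the value read on the at most $1/3$ failure event is irrelevant). This costs one extra query and decides $f_\sab$ (resp.\ $f_\sab^\strong$) with success probability at least $2/3$.

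Next come the two \emph{input-model} reductions $\QS_\strong^\ind(f) = O(\QS_\weak^\ind(f))$ and $\QS_\strong(f) = O(\QS_\weak(f))$. The only content here is that the strong oracle $O^\strong_w$ simulates the weak oracle $O^\weak_z$ with $O(1)$ overhead: a strong query at $j$ returns the triple $(x_j,y_j,z_j)$, which in particular contains $z_j$. Concretely, working with fresh ancilla registers, one strong query computes $(x_j,y_j,z_j)$, a (non-oracle) modular addition copies $z_j$ into the algorithm's value register, and the ancillas are then uncomputed using that $O^\strong_w$ has order $4$, so $(O^\strong_w)^{-1} = (O^\strong_w)^3$ — four strong queries per weak query, which is fine for an $O(\cdot)$ bound. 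Composing a weak-model algorithm for $f_\sab^\ind$ (resp.\ $f_\sab$) with this simulation, and using the projection $w \mapsto z$ from strong to weak sabotaged inputs to transfer correctness of the relation/function, yields both claims.

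Finally I would bound $\QS_\weak^\ind(f) = O(\min\cbra{\RS(f),\sqrt n})$; together with the first paragraph this also gives $\QS_\weak(f) = O(\QS_\weak^\ind(f))$ and closes both chains. For the $\RS$ bound, a classical randomized query algorithm is a special case of a quantum one, so $\QS_\weak^\ind(f) = \Q(f_\sab^\ind) \le \R(f_\sab^\ind) = \RS_\weak^\ind(f) = \Theta(\RS(f))$, the last equality being Proposition~\ref{prop: all sab equal randomized}. For the $\sqrt n$ bound, every $z \in D_\sab$ has at least one coordinate in $\cbra{*,\dagger}$ (the underlying $0$-input and $1$-input must differ somewhere), and a single weak query tests whether a given coordinate lies in $\cbra{*,\dagger}$; Grover search~\cite{Gro96} therefore finds and outputs such an index in $O(\sqrt n)$ weak queries.

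I do not expect a genuine obstacle. The only step needing a moment's care is the weak-by-strong oracle simulation, where the ancilla registers must be uncomputed cleanly so that the interference in the simulated algorithm is preserved — which is exactly what the ``order $4$'' observation delivers — and the constant-probability bookkeeping across the one-extra-query reductions is routine (errors stay bounded and can be amplified back to the standard $1/3$ if desired).
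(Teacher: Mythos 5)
Your proposal is correct and follows essentially the same route as the paper: output-model reductions by one extra query after obtaining an index, input-model reductions by $O(1)$-query simulation of the weak oracle by the strong one, Grover for the $\sqrt{n}$ bound, and reduction to $\RS_\weak^\ind(f) = \Theta(\RS(f))$ via Proposition~\ref{prop: all sab equal randomized} for the $\RS$ bound. You spell out the ancilla-uncomputation details of the oracle simulation more explicitly than the paper does, but the argument is the same.
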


    \begin{proof}
        Just as in the randomized case, $\QS_{\strong}(f) = O(\QS_{\weak}(f))$ and $\QS_{\strong}^{\ind}(f) = O(\QS_{\weak}^{\ind}(f))$, because the input model is stronger, i.e., we can simulate a query to the weak oracle with $O(1)$ queries to the strong oracle. Furthermore, $\QS_{\strong}(f) = O(\QS_{\strong}^{\ind}(f))$ and $\QS_{\weak}(f) = O(\QS_{\weak}^{\ind}(f))$, because once we have found a $j \in [n]$ where $x_j \neq y_j$, we can query that index with one more query to find figure out whether we have a $*$-input or a $\dagger$-input.

        In the weak input model, we can always simply run Grover's algorithm to find a position $j \in [n]$ where $z_j \in \{*,\dagger\}$. This takes $O(\sqrt{n})$ queries. Thus, it remains to show that $\QS_{\weak}^{\ind}(f) = O(\RS(f))$. We showed in the previous proposition that $\RS(f)$ is the same up to constants to $\RS_{\weak}^{\ind}(f)$, and since the quantum computational model is only stronger than the randomized one, we find $\QS_{\weak}^{\ind}(f) = O(\RS_{\weak}^{\ind}(f)) = O(\RS(f))$.
    \end{proof}

    \section{Upper bounds on $\QS$}

    In this section, we prove upper bounds on the complexity measures introduced in Section~\ref{sec:QuantumSabotageComplexity}. We start by showing that the quantum sabotage complexity in the strong model is upper bounded by the regular bounded-error quantum query complexity. Thereby, we prove that $\QS_{\strong}(f)$ has the property that was sought for in \cite{BK18}, i.e., in this model computing $f_{\sab}$ indeed costs at most as many queries as computing $f$ itself.

    \qsvsq*

    \begin{proof}
        Let $\mathcal{A}$ be a bounded-error quantum query algorithm that computes $f$. We construct a quantum query algorithm $\mathcal{B}$ that computes $f_{\sab}$ in the strong input model.

        Recall that in the strong input model (as in Definition~\ref{def:StrongInputModel}), our input is viewed as $w = ((x_j, y_j, z_j))_{j=1}^n$ where $f(x) = 0, f(y) = 1$ and $z$ is the sabotaged input constructed from $x$ and $y$. A query on the $j$th position to the oracle $O_w^{\strong}$ returns a tuple $(x_j, y_j, z_j)$. Now, we define $\mathcal{B}$ to be the same algorithm as $\mathcal{A}$, but whenever $\mathcal{A}$ makes a query, it performs the following operation instead:

        \begin{enumerate}
            \setlength\itemsep{-.3em}
            \item Query $O_w^{\strong}$, denote the outcome by $(x_j,y_j,z_j)$.
            \item If $z_j \in \{0,1\}$, return $z_j$.
            \item If $z_j = *$, return $x_j$.
            \item If $z_j = \dagger$, return $y_j$.
        \end{enumerate}

        Note that this operation can indeed be implemented quantumly making $2$ queries to $O_w^{\strong}$. The initial query performs the instructions described above, and the second query uncomputes the values from the tuple we don't need for the rest of the computation. Note here that $(O_w^\strong)^4 = I$, and thus $(O_w^\strong)^3 = (O_w^\strong)^{-1}$, which is what we need to implement for the uncomputation operations.

        We observe that the above operation always returns $x_j$ whenever we have a $*$-input, and $y_j$ whenever we have a $\dagger$-input. Thus, if we run algorithm $\mathcal{B}$ with this oracle operation (in superposition, including uncomputation), then we output $f(x) = 0$ on a $*$-input, and $f(y) = 1$ on a $\dagger$-input, with the same success probability as that of $\mathcal{A}$. As this query operation can be implemented with a constant number of calls to the query oracle $O_w^\strong$, we conclude that $\QS_{\strong}(f) = O(\Q(f))$.
    \end{proof}

    It is not obvious how we can modify the above algorithm to also output the index where $x$ and $y$ differ. However, we can design such an algorithm using very different techniques, and give an upper bound on $\QS_{\strong}^{\ind}(f)$ in terms of $\QD(f)$. To that end, we first prove a fundamental lemma that is similar to~\cite[Lemma~12]{BK19}.

    \hybrid*

    \begin{proof}
        We write $y = x_B$, and we let $\ket{\psi_x^t}$ and $\ket{\psi_y^t}$ be the states right before the $t$h query, when we run $\mathcal{A}$ on inputs $x$ and $y$, respectively. We also let $\ket{\psi_x}$ and $\ket{\psi_y}$ be the final states of the algorithm run on $x$ and $y$, respectively. Since $\mathcal{A}$ can distinguish $x$ and $y$ with high probability, we observe that $\ket{\psi_x}$ and $\ket{\psi_y}$ must be far apart, i.e., their inner product must satisfy
        \[1 - |\braket{\psi_x}{\psi_y}| = \Omega(1).\]

        For every $j \in [n]$, let $\mathcal{H}_j$ be the subspace of the state space of $\mathcal{A}$ that queries the $j$th bit of the input. In other words, we let $\mathcal{H}_j$ be the span of all states that pick up a phase of $(-1)^{x_j}$, when the algorithm $\mathcal{A}$ calls the oracle $O_x$. We let $\Pi_j$ be the projector on this subspace.

        Next, we let $\mathcal{H}_B$ be the subspace that contains all $\mathcal{H}_j$'s with $j \in B$. In other words, we write $\mathcal{H}_B = \oplus_{j \in B} \mathcal{H}_j$. We immediately observe that the projector onto $\mathcal{H}_B$, denoted by $\Pi_B$, satisfies $\Pi_B = \sum_{j \in B} \Pi_j$. Note that it is exactly the subspace $\mathcal{H}_B$ on which the oracles $O_x$ and $O_y$ act differently. So, intuitively, if a state $\ket{\psi}$ has a big component in $\mathcal{H}_B$, then $O_x\ket{\psi}$ and $O_y\ket{\psi}$ will be far apart from each other.

        Now, we define $p_{x,t} := \norm{\Pi_B\ket{\psi_x^t}}^2$, i.e., the squared overlap of the state $\ket{\psi_x^t}$ with the subspace $\mathcal{H}_B$. Intuitively, if we were to interrupt the algorithm $\mathcal{A}$ run on input $x$ right before the $t$th query, and we were to measure the query register, then the probability of measuring a $j \in B$ is $p_{x,t}$.

        The crucial observation that we make is that
        \begin{align*}
            &\left|\braket{\psi_x^t}{\psi_y^t}\right| - \left|\braket{\psi_x^{t+1}}{\psi_y^{t+1}}\right| \leq \left|\braket{\psi_x^t}{\psi_y^t} - \braket{\psi_x^{t+1}}{\psi_y^{t+1}}\right| = \left|\braket{\psi_x^t}{\psi_y^t} - \bra{\psi_x^t}O_x^{\dagger}O_y\ket{\psi_y^t}\right| \\
            &\qquad = \left|\bra{\psi_x^t}\left(I - O_x^{\dagger}O_y\right)\ket{\psi_y^t}\right| = 2\left|\bra{\psi_x^t}\Pi_B\ket{\psi_y^t}\right| \leq 2\norm{\Pi_B\ket{\psi_x^t}} \cdot \norm{\Pi_B\ket{\psi_y^t}} \\
            &\qquad = 2\sqrt{p_{x,t} \cdot p_{y,t}} \leq p_{x,t} + p_{y,t}.
        \end{align*}
        Here, we used the triangle inequality, the Cauchy-Schwarz inequality, and the AM-GM inequality, in order. Finally, we observe that the initial states for algorithm $\mathcal{A}$ run on $x$ and $y$ are the same, and so $|\braket{\psi_x^1}{\psi_y^1}| = 1$. Thus, identifying $\ket{\psi_x^{T+1}}$ with $\ket{\psi_x}$, and similarly for $y$, we obtain that
        \[1 - |\braket{\psi_x}{\psi_y}| = \sum_{t=1}^T |\braket{\psi_x^t}{\psi_y^t}| - |\braket{\psi_x^{t+1}}{\psi_y^{t+1}}| \leq \sum_{t=1}^T (p_{x,t} + p_{y,t}).\qedhere\]
    \end{proof}

    We now show how the above lemma can be used to prove a connection between $\QS_{\strong}^{\ind}(f)$ and $\QD(f)$.

    \begin{theorem}
        \label{thm:qsvsqd}
        Let $f : \{0,1\}^n \to \{0,1\}$. We have $\QS_{\strong}^{\ind}(f) = O(\QD(f)^{3/2})$.
    \end{theorem}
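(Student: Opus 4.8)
The plan is to combine Lemma~\ref{lem: hybrid new} with amplitude amplification, using the strong oracle crucially in two ways: to simulate query access to \emph{both} $x$ and $y$, so that a distinguishing algorithm can be run on either of them, and to \emph{verify} whether a candidate index lies in the block $B = \{ j : x_j \neq y_j\}$ of sabotaged coordinates.

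\textbf{Setup.} Let $\mathcal{A}$ be an optimal quantum distinguishing algorithm for $f$, making $T := \QD(f)$ queries, and fix an input $w = (x,y,\cdot)$ to the strong sabotage relation with $f(x)=0$, $f(y)=1$, so that $y = x_B$. Since $f(x)\neq f(y)$, the output states of $\mathcal{A}$ on $x$ and on $y$ are a constant trace distance apart, so Lemma~\ref{lem: hybrid new} (with $p_t = p_{x,t}$ and $p_t^B = p_{y,t}$ in its notation) gives $\sum_{t=1}^{T}(p_{x,t}+p_{y,t}) = \Omega(1)$, where $p_{x,t}$ (resp.\ $p_{y,t}$) is the probability of measuring an index in $B$ when $\mathcal{A}$ is interrupted just before its $t$-th query on input $x$ (resp.\ $y$). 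Each of $\mathcal{A}$'s oracle calls to $x$, and likewise to $y$, can be simulated with $O(1)$ queries to $O_w^{\strong}$.

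\textbf{A single-shot procedure.} I would build a unitary $U$ (input-independent apart from its oracle calls) that: (i) prepares a uniform superposition $\tfrac{1}{\sqrt{2T}}\sum_{b\in\{x,y\}}\sum_{t\in[T]}\ket{b}\ket{t}$ over a ``which input'' qubit and a ``stopping time'' register; (ii) controlled on $b$, runs $\mathcal{A}$ on input $b$ for $T$ steps, simulating each oracle call via $O_w^{\strong}$, and just before step $t$ — controlled on the stopping-time register — coherently copies $\mathcal{A}$'s query-index register into a fresh register $\ket{j}$; (iii) queries position $j$ via $O_w^{\strong}$ and sets a flag to $1$ iff $z_j \in \{*,\dagger\}$, i.e.\ iff $j\in B$. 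This $U$ uses $O(T)$ queries (the only queries are the $O(T)$ simulating $\mathcal{A}$'s $T$ queries, plus $O(1)$ for verification; the conditional copies in (ii) are query-free), and by Lemma~\ref{lem: hybrid new} the flag equals $1$ with probability $\tfrac{1}{2T}\sum_t(p_{x,t}+p_{y,t}) = \Omega(1/T)$; whenever the flag is $1$, the register $\ket{j}$ holds a genuine index in $B$, which is exactly what the strong sabotage relation asks for.

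\textbf{Amplification and the main obstacle.} Finally I would apply amplitude amplification to $U$ with the flag register as success indicator: the success amplitude is $\sqrt{\Omega(1/T)}$, so a constant success probability is reached after $O(\sqrt{T})$ rounds, each costing $O(T)$ queries for $U$ and $U^{-1}$, for a total of $O(T^{3/2}) = O(\QD(f)^{3/2})$; a constant number of repetitions with a classical recheck of the returned index pushes the success probability above $2/3$. The main technical point is the coherentization in step (ii): running $\mathcal{A}$ over a superposition of stopping times while snapshotting its query register at the right moment, and ensuring the leftover work registers of $\mathcal{A}$ do not interfere with the amplification. This is handled in the standard way, since amplitude amplification needs only $U$, $U^{-1}$, a reflection about the all-zero state, and a flag-controlled reflection — junk registers are harmless and are uncomputed by $U^{-1}$; and if one prefers not to track the constant in Lemma~\ref{lem: hybrid new}, a geometric (``exponential search'') version of amplitude amplification costs only an extra constant factor.
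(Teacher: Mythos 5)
Your proposal is correct and matches the paper's argument essentially verbatim: interrupt a $\QD$ algorithm at a random time on a randomly chosen one of $x,y$, use Lemma~\ref{lem: hybrid new} to lower bound the probability of seeing a sabotaged index by $\Omega(1/T)$, verify candidates with a single strong-oracle query, and wrap the whole thing in amplitude amplification to get $O(T^{3/2})$. The only difference is that you spell out the coherentization (superposing over the choice of input and stopping time inside a single unitary $U$) that the paper leaves implicit when it invokes amplitude amplification.
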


    \begin{proof}
        Suppose we have an algorithm $\mathcal{A}$ that distinguishes between input $x \in f^{-1}(0)$ and $y \in f^{-1}(1)$ with high probability in $T$ queries. Then, we construct an algorithm $\mathcal{B}$ that works in the strong sabotage input model, and finds an index $j \in [n]$ where $x_j \neq y_j$.

        First, consider the following procedure. We pick an input $x$ or $y$ with probability $1/2$, and we pick a time step $t \in \{1, \dots, T\}$ uniformly at random. We run $\mathcal{A}$ until right before the $t$th query, and then we measure the query register to obtain an index $j \in [n]$. From Lemma~\ref{lem: hybrid new}, we obtain that the probability that $x_j \neq y_j$ is lower bounded by $\Omega(1/T)$.\footnote{It is important to remark here that we have access to our strong sabotage oracle now (Definition~\ref{def:StrongInputModel}). Recall that there are four registers: $\ket{j}, \ket{b_x}, \ket{b_y}, \ket{b_z}$. When we say ``run $\mathcal{A}$'' with this oracle, we mean all operations act as identity on the last register.} Thus, running this algorithm $O(T)$ times would suffice to find a $j \in [n]$ such that $x_j \neq y_j$, with high probability.

        However, we can do slightly better than that. Note that once the algorithm gives us an index $j \in [n]$, it takes just one query (to $z$) to find out if $z_j \in \cbra{*, \dagger}$. Thus, we can use amplitude amplification, and use $O(\sqrt{T})$ iterations of the above procedure, to find a $j \in [n]$ such that $z_j \in \cbra{*, \dagger}$ (equivalently, $x_j \neq y_j$). Each application of the procedure takes $O(T)$ queries to implement in the worst case. Thus, the final query complexity is $O(T^{3/2})$.
    \end{proof}

    Combining the above result with \cite[Proposition~6]{BK19}, which states that $\QD(f) = O(\Q(f))$, immediately yields the following theorem.

    \qsindvsq*

    \section{$\QS_\strong(f)$ vs.~$\sqrt{\fbs(f)}$}

    So far we have shown upper bounds on quantum sabotage complexity. In this section we show our lower bounds. We first show that $\QS_\strong(f)$ (and hence $\QS_\strong^\ind(f), \QS_\weak(f), \QS_\weak^\ind(f)$ as well) is bounded from below by $\sqrt{\fbs(f)}$. This is a generalization of the known bound of $\Q(f) = \Omega(\sqrt{\bs(f)})$~\cite{BBCMW01}. In particular, this already implies that quantum sabotage complexity is polynomially related to quantum query complexity for all total Boolean functions $f$. Next observe that, unlike in the usual quantum query setting, this $\sqrt{\fbs(f)}$ lower bound is \emph{tight} for standard functions such as Or, And, Majority and Parity because of the Grover-based $O(\sqrt{n})$ upper bound on the quantum sabotage complexity of all functions. This suggests the possibility of the quantum sabotage complexity of $f$ being $\Theta(\sqrt{\fbs(f)})$ for all total $f$. In the next subsection we rule this out, witnessed by $f$ as the Indexing function, for which we show the quantum sabotage complexity to be $\Theta(\fbs(f))$.

    \subsection{A general lower bound}

    In the appendix we show that $\RS(f) = \Omega(\fbs(f))$, as well as the quantum bound of $\Q(f) = \Omega(\sqrt{\fbs(f)})$.
    We now wish to follow the same approach as in the proof of the latter bound for proving a lower bound on $\QS_{\strong}(f)$. To that end, we know that $\QS_{\strong}(f) \coloneqq \Q(f_\sab^\strong)= \Omega(\ADV^+(f_{\sab}^{\strong}))$~\cite{Amb06}, so it remains to show that $\ADV^+(f_{\sab}^{\strong}) = \Omega(\sqrt{\fbs(f)})$. Thus, we adapt the proof of Lemma~\ref{lem:fbsvsadv} in the sabotaged setting.

    \begin{lemma}
        Let $n$ be a positive integer, $D \subseteq \zone^n$ and $f : D \to \zone$. Then, $\ADV^+(f_{\sab}^{\strong}) = \Omega(\sqrt{\fbs(f)})$.
    \end{lemma}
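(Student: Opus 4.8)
The plan is to exhibit an explicit feasible solution $\Gamma$ to the non-negative adversary program for $f_\sab^\strong$ (Definition~\ref{def:adv}) with $\norm{\Gamma} = \Omega(\sqrt{\fbs(f)})$; this is precisely the claimed bound. The natural starting point is the ``star'' matrix witnessing $\ADV^+(f) = \Omega(\sqrt{\fbs(f)})$, but it is worth noting up front that transcribing it verbatim into the sabotaged setting fails. If one takes a single sabotaged input $w_0$ as the centre, sabotaged inputs of the opposite type as leaves, and lets $B_0$ be the (nonempty) set of coordinates on which $w_0$ carries a $*$ or a $\dagger$, then $w_0$ differs from \emph{every} leaf at \emph{every} coordinate of $B_0$: at such a coordinate $w_0$'s third register is a sabotage symbol, whereas a leaf of the opposite type carries either a bit or the other sabotage symbol there. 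Hence, for each $j \in B_0$, the matrix $\Gamma \circ \Delta_j$ retains the entire star and so $\norm{\Gamma \circ \Delta_j} = \norm{\Gamma}$, which forces $\norm{\Gamma} = O(1)$. The fix is to replace the star by a rank-one \emph{complete bipartite} gadget, so that no single coordinate ``sees'' all of the adversary weight.

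Concretely, I would proceed as follows. Let $x^\star$ attain $\fbs(f) = \fbs(f, x^\star)$, and assume $f(x^\star) = 0$ (the case $f(x^\star) = 1$ is symmetric, with the two sides of the sabotage exchanged). Let $\cbra{w_y}_{y \in f^{-1}(1)}$ be an optimal solution of the linear program in Definition~\ref{def:fbs} at $x^\star$, and put $B_y := \cbra{j \in [n] : x^\star_j \neq y_j}$, so that $\sum_y w_y = \fbs(f)$ and $\sum_{y : j \in B_y} w_y \le 1$ for every $j \in [n]$. Define $\Gamma$ on $D^\strong_\sab$ to be supported exactly on the pairs $(x^\star, y, *)$, $(x^\star, y', \dagger)$ with $y, y' \in f^{-1}(1)$, via
\[
\Gamma\big[(x^\star, y, *),\,(x^\star, y', \dagger)\big] \;=\; \Gamma\big[(x^\star, y', \dagger),\,(x^\star, y, *)\big] \;=\; \frac{\sqrt{w_y\, w_{y'}}}{2\sqrt{\fbs(f)}}.
\]
Then $\Gamma$ is symmetric and entrywise non-negative, and it vanishes whenever $f_\sab^\strong$ takes equal values on the two inputs, since it only links $S_*^\strong$ with $S_\dagger^\strong$. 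Setting $v := (\sqrt{w_y})_{y \in f^{-1}(1)}$ and $C := \tfrac{1}{2\sqrt{\fbs(f)}}\, v v^\top$, the matrix $\Gamma$ is — after grouping the $S_*^\strong$- and $S_\dagger^\strong$-indices — the symmetric bipartite matrix built from the block $C$, so $\norm{\Gamma} = \norm{C} = \tfrac{1}{2\sqrt{\fbs(f)}}\norm{v}_2^2 = \tfrac{1}{2}\sqrt{\fbs(f)}$, which is the target lower bound.

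It remains to verify $\norm{\Gamma \circ \Delta_j} \le 1$ for every $j \in [n]$. The one step that needs care is a short case analysis on the three registers of the strong oracle, showing that $(x^\star, y, *)$ and $(x^\star, y', \dagger)$ differ at coordinate $j$ \emph{precisely when} $j \in B_y \cup B_{y'}$: outside $B_y \cup B_{y'}$ all three registers read $(x^\star_j, x^\star_j, x^\star_j)$ on both inputs, whereas inside there is always a mismatch in the third register ($*$ versus $\dagger$, or a sabotage symbol versus a bit). Writing $R_j := \cbra{y \in f^{-1}(1) : j \in B_y}$ and letting $P_{R_j}$ project onto the coordinates in $R_j$, this means the non-zero block of $\Gamma \circ \Delta_j$ is exactly $C^{(j)} := P_{R_j} C + (I - P_{R_j}) C P_{R_j}$. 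Hence $\norm{\Gamma \circ \Delta_j} = \norm{C^{(j)}} \le \norm{P_{R_j} C} + \norm{C P_{R_j}} = 2\norm{P_{R_j}C}$ using $C = C^\top$, and since $C$ is rank one, $\norm{P_{R_j}C} = \tfrac{1}{2\sqrt{\fbs(f)}}\norm{P_{R_j}v}_2\,\norm{v}_2 = \tfrac{1}{2}\sqrt{\sum_{y \in R_j} w_y} \le \tfrac{1}{2}$ by the LP constraint. Thus $\Gamma$ is feasible, and $\ADV^+(f_\sab^\strong) \ge \norm{\Gamma} = \tfrac{1}{2}\sqrt{\fbs(f)} = \Omega(\sqrt{\fbs(f)})$.

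The main obstacle is conceptual rather than computational: recognizing that the straightforward transcription of the star construction collapses to $O(1)$ (for the reason in the first paragraph) and that the correct replacement is a rank-one biclique whose normalisation $\tfrac{1}{2\sqrt{\fbs(f)}}$ is tuned so that the ``one unit of weight per coordinate'' granted by the fractional block sensitivity LP exactly absorbs the factor of $2$ lost because a coordinate $j$ can be hit from either endpoint of a pair. Once the characterisation of which coordinates distinguish $(x^\star, y, *)$ from $(x^\star, y', \dagger)$ is in place, the operator-norm estimates are routine.
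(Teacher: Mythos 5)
Your proof is correct and takes essentially the same approach as the paper: both build a rank-one biclique adversary matrix for $f_\sab^\strong$ out of the fractional block sensitivity weights $(w_y)$, with nonzero entries $\sqrt{w_y w_{y'}}$ only between the $*$-sabotaged input $(x^\star,y,*)$ and the $\dagger$-sabotaged input $(x^\star,y',\dagger)$, and both bound $\norm{\Gamma\circ\Delta_j}$ by observing that the killed entries are exactly those with $x^\star_j=y_j=y'_j$. The only cosmetic differences are that you normalise upfront while the paper rescales at the end, and you bound the constraint via $\Gamma\circ\Delta_j = P_{R_j}C+(I-P_{R_j})CP_{R_j}$ (so $\norm{\cdot}\le 2\norm{P_{R_j}C}$), whereas the paper writes the same matrix in $2\times2$ block form $\begin{bmatrix}A&B\\B^\top&0\end{bmatrix}$ and bounds $\norm{A}+\norm{B}$ — the two estimates are essentially identical once the rank-one structure is used.
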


    \begin{proof}
        Let $x$ be the instance for which $\fbs(f) = \fbs(f,x)$, and let $(w_y)_{y \in Y}$ be the optimal weight assignment (see Definition~\ref{def:fbs}). Similar to the proof of Lemma~\ref{lem:fbsvsadv}, we generate a (non-negative weight) adversary matrix, $\Gamma \in \mathbb{R}^{D_{\sab}^{\strong} \times D_{\sab}^{\strong}}$. We define $\Gamma$ to be the all-zeros matrix, except for the instances where $((x,y,*),(x,y',\dagger))$ and $((x,y,\dagger),(x,y',*))$, with $y,y' \in Y$, where we define it to be
        \[\Gamma[(x,y,*),(x,y',\dagger)] = \Gamma[(x,y',\dagger),(x,y,*)] = \sqrt{w_yw_{y'}}.\]

        $\Gamma$ has a simple sparsity pattern. Only the rows and columns that are labeled by $(x,y,*)$ and $(x,y,\dagger)$, with $y \in Y$, are non-zero. Additionally, observe from the definition of the matrix entries that for any $y,y' \in Y$, we have
        \[\Gamma[(x,y,*),(x,y',\dagger)] = \sqrt{w_yw_{y'}} = \Gamma[(x,y,\dagger),(x,y',*)].\]
        Thus, in every $2 \times 2$-block formed by rows $(x,y,*)$ and $(x,y,\dagger)$ and columns $(x,y',*)$ and $(x,y',\dagger)$, we have that the two diagonal elements are $0$, and the two off-diagonal elements are equal. Hence, by removing unimportant rows and columns that are completely zero, we can rewrite $\Gamma$ as
        \[\Gamma = \Gamma' \otimes \begin{bmatrix}
            0 & 1 \\
            1 & 0
        \end{bmatrix}, \qquad \text{where} \qquad \Gamma' \in \mathbb{R}^{Y \times Y}, \qquad \text{with} \qquad \Gamma'[y,y'] = \sqrt{w_yw_{y'}}.\]

        It now becomes apparent that $\Gamma'$ is of rank $1$. Indeed, it is the outer product of a vector $\sqrt{w} \in \mathbb{R}^Y$, that contains the entries $\sqrt{w_y}$ at every index labeled by $y$. From some matrix arithmetic, we now obtain
        \[\norm{\Gamma} = \norm{\Gamma'} \cdot 1 = \norm{\sqrt{w}\sqrt{w}^T} = \norm{\sqrt{w}}^2 = \sum_{y \in Y} w_y = \fbs(f).\]

        Thus, it remains to prove that $\norm{\Gamma \circ \Delta_j} = O(\sqrt{\fbs(f)})$, for all $j \in [n]$. Indeed, then we can scale our matrix $\Gamma$ down by $\Theta(\sqrt{\fbs(f)})$ so that it is feasible for the optimization program in Definition~\ref{def:adv}, and the objective value will then become $\Theta(\sqrt{\fbs(f)})$ as predicted.

        Let $j \in [n]$. To compute $\norm{\Gamma \circ \Delta_j}$, we look at its sparsity pattern. Observe that whenever we query the $j$th bit of $(x,y,*)$ and $(x,y',\dagger)$, we obtain the tuples $(x_j,y_j,z_j)$ and $(x_j,y_j',z_j')$. If $y_j \neq y_j'$, then it is clear that these tuples are not the same. Similarly, if $x_j \neq y_j = y_j'$, then both $z_j$'s will be different, as one will be a $*$ and the other will be a $\dagger$. Thus, the queried tuples are only identical whenever $x_j = y_j = y_j'$, which implies that we can rewrite
        \[\Gamma \circ \Delta_j = \Gamma'_j \otimes \begin{bmatrix}
            0 & 1 \\
            1 & 0
        \end{bmatrix}, \quad \text{where} \quad \Gamma_j' \in \mathbb{R}^{Y \times Y}, \quad \text{with} \quad \Gamma_j'[y,y'] = \begin{cases}
            0, & \text{if } y_j = y_j' = x_j, \\
            \sqrt{w_yw_{y'}}, & \text{otherwise}.
        \end{cases}\]
        We now let $Y_0 = \{y \in Y : x_j \neq y_j\}$ and $Y_1 = \{y \in Y : x_j = y_j\}$. We interpret $\Gamma'_j$ as a $2 \times 2$-block matrix, where the first rows and columns are indexed by $Y_0$ and the last ones are indexed by $Y_1$. Then, $\Gamma_j'$ takes on the shape
        \[\Gamma_j' = \begin{bmatrix}
            A & B \\
            B^T & 0
        \end{bmatrix},\]
        with $A \in \mathbb{R}^{Y_0 \times Y_0}$ and $B \in \mathbb{R}^{Y_0 \times Y_1}$, defined as
        \[A[y,y'] = \sqrt{w_yw_{y'}}, \qquad \text{and} \qquad B[y,y'] = \sqrt{w_yw_{y'}}.\]
        We observe that
        \[\norm{\Gamma \circ \Delta_j} = \norm{\Gamma_j'} \cdot 1 \leq \norm{A} + \norm{B},\]
        and hence it remains to compute $\norm{A}$ and $\norm{B}$.

        We now define the vectors $\sqrt{w_0} \in \mathbb{R}^{Y_0}$ and $\sqrt{w_1} \in \mathbb{R}^{Y_1}$, defined by $(\sqrt{w_0})_y = \sqrt{w_y}$, and $(\sqrt{w_1})_y = \sqrt{w_y}$. We observe that $A = \sqrt{w_0}\sqrt{w_0}^T$, and $B = \sqrt{w_0}\sqrt{w_1}^T$. Thus, we obtain
        \[\norm{A}^2 = \norm{\sqrt{w_0}\sqrt{w_0}^T}^2 = \norm{\sqrt{w_0}}^4 = \left[\sum_{y \in Y_0} w_y\right]^2 = \left[\sum_{\substack{y \in Y \\ x_j \neq w_j}} w_y\right]^2 \leq 1,\]
        and similarly
        \[\norm{B}^2 = \norm{\sqrt{w_0}\sqrt{w_1}^T}^2 = \norm{\sqrt{w_0}}^2 \cdot \norm{\sqrt{w_1}}^2 = \left[\sum_{\substack{y \in Y \\ x_j \neq y_j}} w_y\right] \cdot \left[\sum_{\substack{y \in Y \\ x_j = y_j}} w_y\right] \leq 1 \cdot \fbs(f).\qedhere\]
    \end{proof}

    The proof of Theorem~\ref{thm: qsab bs lower bound} now follows as a simple corollary from this lemma and the fact that $\QS_{\strong}(f) \coloneqq \Q(f_\sab^\strong) = \Omega(\ADV^+(f_{\sab}^{\strong}))$, where the last bound follows from the positive-weighted adversary lower bound of Ambainis~\cite{Amb06}.

    \subsection{A stronger lower bound for Indexing}

    As observed in the discussion following Theorem~\ref{thm: main 1}, we have $\QS_\weak^\ind(f)$ (and hence $\QS_\weak(f)$ and $\QS_\strong^\ind(f)$ and $\QS_\strong(f)$) is $O(\sqrt{n})$ for all $f : \zone^n \to \zone$. In particular, the $\sqrt{\fbs(f)}$ lower bound for $\QS_\strong(f)$ is tight for standard functions like And, Or, Parity, Majority. In view of this it is feasible that $\QS_\strong(f) = O(\sqrt{\fbs(f)})$ for all total $f : \zone^n \to \zone$. In the remaining part of this section, we rule this out, witnessed by the Indexing function.

    We use Ambainis's adversary method to prove lower bounds on quantum query complexity~\cite{Amb02}.

    \begin{lemma}[{\cite[Theorem~6.1]{Amb02}}]\label{lem: ambainis v2}
        Let $f : \cbra{0, 1, *, \dagger}^k \to \zone$ be a (partial) Boolean function. Let $X, Y \subseteq \cbra{0, 1, *, \dagger}^k$ be two sets of inputs such that $f(x) \neq f(y)$ if $x \in X$ and $y \in Y$. Let $R \subseteq X \times Y$ be nonempty, and satisfy:
        \begin{itemize}
            \item For every $x \in X$ there exist at least $m_X$ different $y \in Y$ such that $(x, y) \in R$.
            \item For every $y \in Y$ there exist at least $m_Y$ different $x \in X$ such that $(x, y) \in R$.
        \end{itemize}
        Let $\ell_{x, i}$ be the number of $y \in Y$ such that $(x, y) \in R$ and $x_i \neq y_i$, and similarly for $\ell_{y, i}$. Let $\ell_{\max} = \max_{i \in [k]}\max_{(x, y) \in R, x_i \neq y_i}\ell_{x,i}\ell_{y,i}$. Then any algorithm that computes $f$ with success probability at least $2/3$ uses $\Omega\rbra{\sqrt\frac{m_X m_Y}{\ell_{\max}}}$ quantum queries to the input function.
    \end{lemma}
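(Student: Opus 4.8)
The plan is to prove this via Ambainis's ``progress function'' adversary argument, which is a hybrid‑method bookkeeping argument in the same spirit as the proof of Lemma~\ref{lem: hybrid new}. Fix a quantum query algorithm $\mathcal{A}$ that computes $f$ with success probability at least $2/3$ using $T$ queries, and for an input $z$ let $\ket{\psi_z^t}$ denote the state of $\mathcal{A}$ on input $z$ right before its $t$-th query, so that $\ket{\psi_z^1}$ is input-independent and $\ket{\psi_z^{T+1}}$ is the final state. Define the progress function $W_t = \sum_{(x,y) \in R} \braket{\psi_x^t}{\psi_y^t}$. I would first establish two endpoint facts. Since the initial state of $\mathcal{A}$ is input-independent, $W_1 = |R|$. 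On the other hand, for each $(x,y) \in R$ we have $f(x) \neq f(y)$, so the output distributions of $\mathcal{A}$ on $x$ and on $y$ have total variation distance at least $1/3$; since the states are pure, $|\braket{\psi_x^{T+1}}{\psi_y^{T+1}}| \le \sqrt{1 - 1/9} = \sqrt{8}/3$, and summing over $R$ and using the triangle inequality gives $|W_{T+1}| \le (\sqrt{8}/3)|R|$. Combining these with the triangle inequality, $\sum_{t=1}^{T} |W_{t+1} - W_t| \ge |W_1| - |W_{T+1}| = \Omega(|R|)$.

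The core of the proof is a per-query bound $|W_{t+1} - W_t| = O(\sqrt{\ell_{\max}\,|X|\,|Y|})$. Decompose $\ket{\psi_z^t} = \bigoplus_{i \in [k]} \ket{\psi_{z,i}^t}$ according to the value $i$ in the query-index register, so that $\norm{\psi_{z,i}^t}^2$ is exactly the probability of being found to query position $i$. Because the non-query unitaries of $\mathcal{A}$ are input-independent and cancel in the inner product, $\braket{\psi_x^{t+1}}{\psi_y^{t+1}} = \bra{\psi_x^t} O_x^{\dagger} O_y \ket{\psi_y^t}$, where $O_x, O_y$ are the weak query oracles of Definition~\ref{def:WeakInputModel}; these agree on every index-$i$ subspace with $x_i = y_i$, and on a subspace with $x_i \neq y_i$ the operator $O_x^{\dagger} O_y - I$ is a nontrivial cyclic shift on $\mathbb{Z}_4$ minus the identity, hence has operator norm at most $2$. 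Therefore $|W_{t+1} - W_t| \le 2 \sum_{(x,y) \in R} \sum_{i : x_i \neq y_i} \norm{\psi_{x,i}^t}\,\norm{\psi_{y,i}^t}$. Now I would apply Cauchy--Schwarz twice. First, for each fixed $i$, write the product as $\big((\ell_{y,i}/\ell_{x,i})^{1/4}\norm{\psi_{x,i}^t}\big)\big((\ell_{x,i}/\ell_{y,i})^{1/4}\norm{\psi_{y,i}^t}\big)$ and apply Cauchy--Schwarz over the relevant pairs; the key estimate is that for a fixed $x$ there are exactly $\ell_{x,i}$ partners $y$ with $(x,y)\in R$ and $x_i\neq y_i$, each of which satisfies $\ell_{y,i} \le \ell_{\max}/\ell_{x,i}$, so $\sum_y \ell_{y,i}^{1/2} \le \sqrt{\ell_{x,i}\ell_{\max}}$, which yields $\sum_{(x,y):x_i\neq y_i}\norm{\psi_{x,i}^t}\norm{\psi_{y,i}^t} \le \sqrt{\ell_{\max}}\,\big(\sum_x \norm{\psi_{x,i}^t}^2\big)^{1/2}\big(\sum_y \norm{\psi_{y,i}^t}^2\big)^{1/2}$. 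Second, apply Cauchy--Schwarz over $i \in [k]$ and use that each state is a unit vector, so $\sum_i\sum_x \norm{\psi_{x,i}^t}^2 = |X|$ and $\sum_i\sum_y \norm{\psi_{y,i}^t}^2 = |Y|$, obtaining $|W_{t+1} - W_t| \le 2\sqrt{\ell_{\max}\,|X|\,|Y|}$.

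Putting the two parts together gives $\Omega(|R|) \le \sum_{t=1}^{T}|W_{t+1}-W_t| \le 2T\sqrt{\ell_{\max}\,|X|\,|Y|}$, i.e.\ $T = \Omega\big(|R|/\sqrt{\ell_{\max}\,|X|\,|Y|}\big)$. To finish, I would invoke the degree hypotheses on $R$: every $x \in X$ lies in at least $m_X$ pairs of $R$ and every $y \in Y$ lies in at least $m_Y$ pairs, so $|R| \ge m_X|X|$ and $|R| \ge m_Y|Y|$, hence $|R|^2 \ge m_X m_Y |X|\,|Y|$ and therefore $T = \Omega\big(\sqrt{m_X m_Y/\ell_{\max}}\big)$, which is the claimed bound.

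The step I expect to be the most delicate is the per-query estimate --- in particular, performing the two Cauchy--Schwarz applications ``in the right order'' so that the final bound scales as $\sqrt{|X|\,|Y|}$ and not as $|X| + |Y|$; the latter (which is what a naive AM--GM split produces) would only give the weaker bound $\Omega(\min\{m_X,m_Y\}/\sqrt{\ell_{\max}})$. A secondary point needing care is that the input alphabet here is $\{0,1,*,\dagger\}$ rather than $\{0,1\}$, but this enters only through the bound $\norm{O_x^{\dagger}O_y - I} \le 2$ on each differing-index subspace, which holds for the mod-$4$ addition oracle of Definition~\ref{def:WeakInputModel} (and indeed for any finite alphabet), so the argument goes through unchanged.
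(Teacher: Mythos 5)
The paper does not prove this lemma; it cites it directly as \cite[Theorem~6.1]{Amb02}, so there is no in-paper proof to compare against. Your reconstruction is a correct and complete rendition of Ambainis's original progress-function argument: the endpoint bounds $W_1 = |R|$ and $|W_{T+1}| \leq (\sqrt{8}/3)|R|$ are right, the per-query estimate via the two Cauchy--Schwarz applications with the $(\ell_{y,i}/\ell_{x,i})^{1/4}$ rebalancing trick is exactly how the $\sqrt{\ell_{\max}\,|X|\,|Y|}$ bound is obtained, and the final step of converting $|R|/\sqrt{|X||Y|}$ into $\sqrt{m_X m_Y}$ via the degree hypotheses is correct. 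Your closing remark about the alphabet size entering only through $\norm{O_x^\dagger O_y - I} \leq 2$ is also accurate (and in fact $\norm{U-I} \leq 2$ for any unitary $U$, so no case analysis of the mod-$4$ shift is even needed).

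One tiny notational nit: writing $\ket{\psi_z^t} = \bigoplus_i \ket{\psi_{z,i}^t}$ conflates a direct sum of subspaces with the decomposition of a single vector into orthogonal components; $\ket{\psi_z^t} = \sum_i \ket{\psi_{z,i}^t}$ with the $\ket{\psi_{z,i}^t}$ mutually orthogonal is the intended meaning, and that is what the norm bookkeeping $\sum_i \norm{\psi_{z,i}^t}^2 = 1$ uses.
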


    Define the Indexing function as follows.
    For a positive integer $n > 0$, define the function $\IND_n : \zone^n \times \zone^{2^n} \to \zone$ as $\IND_n(a, b) = b_{\bin(a)}$, where $\bin(a)$ denotes the integer in $[2^n]$ whose binary representation is $a$.

    We first note that the fractional block sensitivity is easily seen to be bounded from below by block sensitivity, and bounded from above by deterministic (in fact randomized) query complexity. Both the block sensitivity (in fact, sensitivity) and deterministic query complexity of $\IND_n$ are easily seen to be $n+1$, implying $\fbs(\IND_n) = n+1$.

    \begin{theorem}\label{thm:sabIND}
        Let $n$ be a positive integer. Then, $\QS_{\weak}(\IND_n) = \Omega(n)$.
    \end{theorem}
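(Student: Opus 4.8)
The plan is to apply Lemma~\ref{lem: ambainis v2} directly to $(\IND_n)_{\sab}$, the weak sabotage function of $\IND_n$, whose inputs lie in $\cbra{0,1,*,\dagger}^{n+2^n}$. I will build the hard instance family around a single minimal difference hidden inside the exponentially large target array. For each address $a \in \zone^n$, let $\mathbf{0}$ denote the all-zero target array and, for a position $p \in [2^n]$, let $e_p$ denote the target array that is $1$ only at position $p$; take the $0$-input $u_a = (a, \mathbf{0})$ and the $1$-input $v_a = (a, e_{\bin(a)})$ of $\IND_n$, which differ in exactly one coordinate, namely target position $\bin(a)$. Let $x_a \coloneqq \sabStar{u_a}{v_a} \in S_*$ and $y_a \coloneqq \sabDagger{u_a}{v_a} \in S_\dagger$: both have address part $a$ and target array $\mathbf{0}$, except that $x_a$ carries a $*$ and $y_a$ carries a $\dagger$ at target position $\bin(a)$. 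I set $X = \cbra{x_a : a \in \zone^n}$ and $Y = \cbra{y_a : a \in \zone^n}$, so $(\IND_n)_{\sab} \equiv 0$ on $X$ and $\equiv 1$ on $Y$. Intuitively these are hard instances because distinguishing $*$ from $\dagger$ forces a query to target position $\bin(a)$, and locating that one position among $2^n$ essentially requires reading all $n$ bits of $a$ --- which a quantum algorithm cannot do faster than $n$ queries.

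For the relation I would take $R = \cbra{(x_a,y_{a'}) : 1 \le d_H(a,a') \le 2}$, where $d_H$ is Hamming distance, so that $m_X = m_Y = \binom n1 + \binom n2 = \Theta(n^2)$. Two choices here are essential: one must \emph{never} relate $x_a$ to $y_a$ with the same address $a$, since the common target position $\bin(a)$ carries $*$ on one side and $\dagger$ on the other and would then be a differing coordinate for all $m$ partners on both sides, forcing $\ell_{\max} \ge m^2$ and collapsing the bound; and the Hamming distance must be capped by a constant, since relating all distinct-address pairs would flip each address bit in half the partners and again blow up the relevant $\ell$ values.

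Next I would compute $\ell_{\max}$. A related pair $(x_a,y_{a'})$, with $K$ the one or two address positions where $a$ and $a'$ differ, differs exactly on $K$ together with target positions $\bin(a)$ (where $x_a$ has $*$ and $y_{a'}$ has $0$, since $\bin(a) \ne \bin(a')$) and $\bin(a')$ (where $x_a$ has $0$ and $y_{a'}$ has $\dagger$); everything else agrees. For an address position $k$, the related $a'$ with $a'_k \ne a_k$ are $a\oplus e_k$ and the $n-1$ strings $a\oplus e_k \oplus e_j$ with $j \ne k$, so $\ell_{x_a,k} = \ell_{y_{a'},k} = n$ and the product is $n^2$. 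For target position $\bin(a)$ one has $\ell_{x_a,\bin(a)} = m_X$ (its $*$ differs from the $0$ that every related $y_{a'}$ shows there) but the companion factor $\ell_{y_{a'},\bin(a)} = 1$, because among inputs of $X$ only $x_a$ itself shows a non-$0$ symbol at target position $\bin(a)$ (distinct addresses have distinct $\bin$-images); the case of $\bin(a')$ is symmetric. Hence every relevant product is at most $\max\cbra{n^2, m_X, m_Y} = n^2$, i.e.\ $\ell_{\max} \le n^2$, and Lemma~\ref{lem: ambainis v2} gives $\QS_\weak(\IND_n) = \Q((\IND_n)_{\sab}) = \Omega(\sqrt{m_X m_Y/\ell_{\max}}) = \Omega(\sqrt{n^4/n^2}) = \Omega(n)$.

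The delicate point --- and the only real obstacle --- is the interplay in the previous paragraph: the target coordinate holding the special symbol is unavoidably a differing coordinate against every related input of the opposite type, so a careless choice of $R$ makes $\ell_{\max}$ as large as $m^2$; spreading the $*$/$\dagger$ positions across the exponentially large target array is what keeps the companion $\ell$-factor at that coordinate down to $1$, and the distance-$\le 2$ neighborhood is what simultaneously pushes $m$ up to $\Theta(n^2)$ while holding the address-coordinate $\ell$ values at $\Theta(n)$. Once these three constraints are reconciled, the rest is a routine coordinate-by-coordinate comparison.
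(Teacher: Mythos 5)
Your proposal is correct and follows essentially the same approach as the paper: the same Ambainis adversary lemma, the same hard instance family $X,Y$ consisting of sabotaged inputs with a single $*$ or $\dagger$ hidden at target position $\bin(a)$, and the same structure of relation pairing addresses at small Hamming distance, with the companion $\ell$-factor at the target coordinate dropping to $1$ because distinct addresses index distinct target positions. The only superficial difference is that you allow Hamming distance $1$ or $2$ on the address part whereas the paper uses distance exactly $2$; both give $m_X = m_Y = \Theta(n^2)$ and $\ell_{\max} = O(n^2)$, hence the same $\Omega(n)$ bound.
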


    \begin{proof}
        Recall from Definition~\ref{defi:sabotage} that $\QS_\weak(\IND_n) = \Q(\IND_{n,\sab})$.
        We construct a hard relation for $f = \IND_{n,\sab}$ and use Lemma~\ref{lem: ambainis v2}. Recall that this relation must contain pairs of inputs. For each pair, the function must evaluate to different outputs. For ease of notation we first define the pairs of inputs $(a_1, b_1), (a_2, b_2)$ in the relation, and then justify that these inputs are indeed in $S_*$ and $S_\dagger$, respectively.

        Define $(a_1, b_1), (a_2, b_2) \in R$ if and only if all of the following hold true:
        \begin{enumerate}
            \item $(a_1, b_1) \in f^{-1}(0)$ (i.e., $(a_1, b_1) \in S_*$ for $\IND_n$), $(x_2, y_2) \in f^{-1}(1)$ (i.e., $(a_2, b_2) \in S_\dagger$ for $\IND_n$),
            \item $|a_1 \oplus a_2| = 2$, i.e., the Hamming distance between $a_1$ and $a_2$ is 2,
            \item $b_1$ is all-0, except for the $\bin(a_1)$'th index, which is $*$,
            \item $b_2$ is all-0, except for the $\bin(a_2)$'th index, which is $\dagger$.
        \end{enumerate}
        First note that $(a_1, b_1) = \sabStar{(a_1, 0^{2^n})}{(a_1, e_{a_1})}$, where $e_{a_1} \in \zone^{2^n}$ is the all-0 string except for the $\bin(a_1)$'th location, which is a 1. Similarly, $(a_2, b_2) = \sabDagger{(a_2, 0^{2^n})}{(a_2, e_{a_2})}$. Thus, $(a_1, b_1)$ and $(a_2, b_2)$ are in $S_*$ and $S_\dagger$, respectively.
        In particular, in the language of Lemma~\ref{lem: ambainis v2} we have
        \begin{align}
            \label{eq:XforR}
            X = \cbra{(a, b) \in \zone^n \times \{0,1,*\}^{2^n} : b~\textnormal{is all-0 except for the~}\bin(a)\textnormal{'th index, which is}~*}.
        \end{align}
        Similarly,
        \begin{align}
            \label{eq:YforR}
            Y = \cbra{(a, b) \in \zone^n \times \{0,1,\dagger\}^{2^n} : b~\textnormal{is all-0 except for the~}\bin(x)\textnormal{'th index, which is}~\dagger}.
        \end{align}

        We now analyze the quantities $m_X, m_Y$ and $\ell_{\max}$ from Lemma~\ref{lem: ambainis v2}.
        \begin{enumerate}
            \item $m_X = m_Y = \binom{n}{2}$: Consider $(a, b) \in X$. The number of elements $(a', b') \in Y$ such that $((a, b), (a', b') \in R)$ is simply the number of strings $a'$ that have a Hamming distance of 2 from $a$, since each such $a'$ corresponds to exactly one $b'$ with $((a, b), (a', b') \in R)$, where $b'$ is the all-0 string except for the $\bin(a)'$th location which is a $\dagger$. Thus $m_X = \binom{n}{2}$. The argument for $m_Y$ is essentially the same.
            \item $\ell_{\max} = \min\cbra{\binom{n}{2}, (n-1)^2}$: We consider two cases.
            \begin{enumerate}
                \item\label{item:proof} $i \in [n]$: Fix $((a, b), (a', b')) \in R$ with $a_i \neq a'_i$. Recall that $\ell_{(a, b), i}$ is the number of $(a'', b'') \in Y$ such that $((a, b), (a'', b'')) \in R$ and $a_i \neq a''_i$. Following a similar logic as in the previous argument, this is simply the number of $a''$ that have Hamming distance 2 from $a$ and additionally satisfy $a_i \neq a''_i$. There are $n-1$ possible locations for the other difference between $a$ and $a''$, so $\ell_{(a, b), i} = n-1$ in this case. Essentially the same argument shows $\ell_{(a', b'), i} = n-1$, and so $\ell_{(a, b), i} \cdot \ell_{(a', b'), i} = (n-1)^2$.
                \item $i \in [2^n]$: Fix $((a, b), (a', b')) \in R$ with $b_i \neq b'_i$. By the structure of $R$, $b$ is the all-0 string except for the $\bin(a)$'th location which is a $*$, and $b'$ is the all-0 string except for the $\bin(a')$'th location which is a $\dagger$. Thus $i \in \cbra{\bin(a), \bin(a')}$. Without loss of generality, assume $i = \bin(a)$, and thus $b_{\bin(a)} = *$. For each $a''$ with Hamming distance 2 from $a$, we have $((a, b), (a'', b'')) \in R$ where $b''$ is all-0 except for the $a''$th index which is $\dagger.$ In particular, $b''_{\bin(a)} = 0$. So we have $\ell_{(a, b), i} = \binom{n}{2}$. On the other hand, we have $b'_{\bin(a)} = 0$. So the only $(a'', b'')$ with $((a'', b''), (a', b')) \in R$ and with $b''_{\bin(a)} \neq b'_{\bin(a)}$ is $(a'', b'') = (a, b)$. Thus $\ell_{(a', b'), i} = 1$.
            \end{enumerate}
        \end{enumerate}
        Lemma~\ref{lem: ambainis v2} then implies
        \begin{align}
            \QS_\weak(\IND_n) = \Q(\IND_{n, \sab}) = \Q(f) = \Omega\rbra{\sqrt\frac{\binom{n}{2}^2}{\min\cbra{\binom{n}{2}, (n-1)^2}}} = \Omega(n),
        \end{align}
        proving the theorem.
    \end{proof}

    This proof can easily be adapted to also yield a lower bound of $\QS_\strong(\IND_n) = \Omega(n)$. We include a proof in the appendix for completeness.
    As a corollary we obtain the following.

    \begin{corollary}
        \label{thm:sabINDothermodels}
        Let $n$ be a positive integer. Then, $\QS_{\strong}^{\ind}(\IND_n)=\Omega(n)$ and $\QS_{\weak}^{\ind}(\IND_n)=\Omega(n)$.
    \end{corollary}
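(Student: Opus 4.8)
The plan is to deduce both lower bounds directly from Theorem~\ref{thm:sabIND} (and its strong-model analogue mentioned right before the corollary) using the elementary relationships among the four sabotage measures recorded in Proposition~\ref{prop:relations-quantum-measures}. Recall from that proposition that $\QS_\strong(f) = O(\QS_\strong^\ind(f))$ and $\QS_\weak(f) = O(\QS_\weak^\ind(f))$ for every (partial) Boolean $f$: once an algorithm has found an index $j$ with $z_j \in \{*,\dagger\}$, a single additional query to that coordinate reveals whether we are in the $*$-case or the $\dagger$-case, so the index-finding task is at least as hard as the decision task in the same input model. Applying this with $f = \IND_n$ and combining with the two established lower bounds $\QS_\weak(\IND_n) = \Omega(n)$ (Theorem~\ref{thm:sabIND}) and $\QS_\strong(\IND_n) = \Omega(n)$ (the appendix variant) immediately gives $\QS_\weak^\ind(\IND_n) = \Omega(\QS_\weak(\IND_n)) = \Omega(n)$ and $\QS_\strong^\ind(\IND_n) = \Omega(\QS_\strong(\IND_n)) = \Omega(n)$.

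Concretely, the steps I would carry out are: first, invoke the strong-model lower bound $\QS_\strong(\IND_n) = \Omega(n)$ from the appendix (whose proof mirrors that of Theorem~\ref{thm:sabIND}, adapting the hard relation $R$ to the strong input model where each query returns a triple $(a_j,b_j,z_j)$ rather than just $z_j$). Second, observe that $\QS_\strong(\IND_n) = O(\QS_\strong^\ind(\IND_n))$ by Proposition~\ref{prop:relations-quantum-measures}, so $\QS_\strong^\ind(\IND_n) = \Omega(n)$. Third, observe that $\QS_\weak(\IND_n) = O(\QS_\weak^\ind(\IND_n))$ by the same proposition, and combine with Theorem~\ref{thm:sabIND} to get $\QS_\weak^\ind(\IND_n) = \Omega(n)$. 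Both conclusions are then in hand.

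There is essentially no obstacle here; the corollary is a bookkeeping consequence of results already proved. The only point requiring a modicum of care is making sure the strong-model lower bound is actually available — but this is supplied by the appendix variant of Theorem~\ref{thm:sabIND}, and alternatively one could note that since $\QS_\weak^\ind(\IND_n)=\Omega(n)$ already follows from Theorem~\ref{thm:sabIND} plus Proposition~\ref{prop:relations-quantum-measures}, and since (again by that proposition) $\QS_\strong(\IND_n) = O(\QS_\strong^\ind(\IND_n))$, it suffices to verify the Ambainis-based argument of Theorem~\ref{thm:sabIND} goes through with the strong oracle; the relation $R$ defined there already specifies the underlying $0$- and $1$-inputs $(a_1,0^{2^n}),(a_1,e_{a_1})$ and $(a_2,0^{2^n}),(a_2,e_{a_2})$, so one simply re-reads each coordinate as the triple it induces and checks that two strong-oracle answers agree exactly when the corresponding weak-oracle answers agree, which does not change $m_X,m_Y,\ell_{\max}$. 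Hence the same $\Omega(n)$ bound holds, and the corollary follows.
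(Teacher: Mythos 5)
Your proposal is correct and takes the same route the paper intends: combine the one-line inequalities $\QS_\strong(f) = O(\QS_\strong^\ind(f))$ and $\QS_\weak(f) = O(\QS_\weak^\ind(f))$ from Proposition~\ref{prop:relations-quantum-measures} with the $\Omega(n)$ lower bounds of Theorem~\ref{thm:sabIND} and its strong-model variant (Corollary~\ref{thm:sabindStrong}). One small economy you could note is that the strong-model bound alone already yields both claims via the chain $\QS_\strong(f) = O(\QS_\strong^\ind(f)) = O(\QS_\weak^\ind(f))$, but invoking both bounds as you do is equally valid.
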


    \section{Open questions}
    In this paper we studied the quantum sabotage complexity of Boolean functions, which we believe is a natural extension of randomized sabotage complexity introduced by Ben-David and Kothari~\cite{BK18}. We note here that in a subsequent work~\cite{BK19} they also defined a quantum analog, but this is fairly different from the notion studied in this paper.

    We argued, by showing several results, that it makes sense to consider four different models depending on the access to input, and output requirements. While it is easily seen that the randomized sabotage complexity of a function remains (asymptotically) the same regardless of the choice of model (see Proposition~\ref{prop: all sab equal randomized}), such a statement is not clear in the quantum setting. In our view, the most interesting problem left open from our work is to prove or disprove that even the quantum sabotage complexity of a function is asymptotically the same in all of these four models. It would also be interesting to see tight polynomial relationships between the various quantum sabotage complexities and quantum query complexity.

    \bibliography{main}

    \appendix

    \section{Lower bounds on $\QS(f)$ and $\RS(f)$ in terms of $\fbs(f)$}

    To the best of our knowledge, the only lower bound on randomized sabotage complexity $\RS(f)$ in the existing literature is $\Omega(\QD(f))$~\cite[Corollary~11]{BK19}. We remark that $\RS(f)$ can also be lower bounded by $\fbs(f)$, as is witnessed by the lemma below. We do not claim this to be a new result, but we have not found a formal reference to such a statement. We include a proof for completeness.

    \begin{lemma}
        \label{lem:fbsrs}
        $\fbs(f) = O(\RS(f))$.
    \end{lemma}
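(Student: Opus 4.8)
The plan is to establish the slightly stronger bound $\fbs(f) = O(\R(f_\sab))$; this suffices because bounded-error query complexity never exceeds zero-error query complexity up to a constant, i.e.\ $\R(f_\sab) = O(\R_0(f_\sab)) = O(\RS(f))$. I would fix an input $x$ attaining $\fbs(f) = \fbs(f,x)$, assume without loss of generality that $f(x) = 0$, let $Y = f^{-1}(1)$, and let $(w_y)_{y \in Y}$ be an optimal solution of the linear program in Definition~\ref{def:fbs}, so that $W := \sum_{y \in Y} w_y = \fbs(f)$ and $\sum_{y \in Y :\, x_j \neq y_j} w_y \le 1$ for every $j \in [n]$. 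For $y \in Y$ write $B_y := \{j \in [n] : x_j \neq y_j\}$ (so that $x \oplus B_y = y$, and $\sabStar{x}{y}$ and $\sabDagger{x}{y}$ carry the sabotage symbol exactly on $B_y$). Finally, I would take a randomized algorithm $\mathcal{R}$ computing $f_\sab$ with worst-case cost $C = \R(f_\sab)$ and error at most $1/3$, and for each value $r$ of its internal randomness let $\mathcal{R}_r$ be the resulting deterministic decision tree, which has depth at most $C$.

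The key object is, for each $r$, the set $J_r$ of positions that $\mathcal{R}_r$ queries along its ``all-agree run'', i.e.\ the run in which every query to a position $j$ is answered by $x_j \in \zone$; since this run traverses a root-to-leaf path of the depth-$C$ tree $\mathcal{R}_r$, we have $|J_r| \le C$. The point is that for every $y \in Y$, the run of $\mathcal{R}_r$ on $\sabStar{x}{y}$ follows this very path up to (and including) the first query it makes inside $B_y$, and likewise for $\sabDagger{x}{y}$. Consequently, if $J_r \cap B_y = \emptyset$, then on neither input does $\mathcal{R}_r$ ever query a position of $B_y$, so it reaches the same leaf and outputs the same bit on $\sabStar{x}{y}$ and on $\sabDagger{x}{y}$ --- but these have opposite $f_\sab$-values, so $\mathcal{R}_r$ errs on at least one of them. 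Hence the event $\{r : J_r \cap B_y = \emptyset\}$ lies inside the union of the error events of $\mathcal{R}$ on $\sabStar{x}{y}$ and on $\sabDagger{x}{y}$, giving $\Pr_r[J_r \cap B_y = \emptyset] \le 2/3$, and therefore $\Pr_r[J_r \cap B_y \neq \emptyset] \ge 1/3$ for every $y \in Y$.

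It remains to combine these bounds using the fractional weights. Multiplying by $w_y$, summing over $y \in Y$, and exchanging summation with the expectation over $r$,
\begin{align*}
    \frac{W}{3} \;&\le\; \sum_{y \in Y} w_y\, \Pr_r[J_r \cap B_y \neq \emptyset] \;=\; \mathbb{E}_r\!\left[\ \sum_{\substack{y \in Y \\ J_r \cap B_y \neq \emptyset}} w_y\ \right] \\
    \;&\le\; \mathbb{E}_r\!\left[\ \sum_{j \in J_r}\ \sum_{\substack{y \in Y \\ x_j \neq y_j}} w_y\ \right] \;\le\; \mathbb{E}_r\big[\,|J_r|\,\big] \;\le\; C,
\end{align*}
where the second inequality is a union bound over $J_r$ (note that $\{y : J_r \cap B_y \neq \emptyset\} = \bigcup_{j \in J_r}\{y : x_j \neq y_j\}$) and the third uses the feasibility constraints of the fractional block sensitivity program. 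Rearranging yields $\fbs(f) = W \le 3C = 3\,\R(f_\sab) = O(\RS(f))$.

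I expect the main obstacle to be pinning down the set $J_r$ with the right size bound. The naive argument --- ``$\mathcal{R}$ must query a position of $B_y$ with probability $\Omega(1)$, so union-bound over positions'' --- only delivers $\fbs(f) = O(n)$, which is vacuous. The improvement comes from the observation that, for a \emph{fixed} random string $r$, the positions $\mathcal{R}_r$ can probe before it ever ``sees'' a $*$ or $\dagger$ form a single set of size at most the query budget, \emph{the same set for every $y$}, so one union bound over $J_r$ (of size $\le C$), rather than over all of $[n]$, is enough. The remaining ingredients --- the reduction from zero-error to bounded-error, and the verification that the run of $\mathcal{R}_r$ on $\sabStar{x}{y}$ (resp.\ $\sabDagger{x}{y}$) tracks the all-agree run until the first query inside $B_y$ --- are routine.
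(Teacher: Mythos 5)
Your proof is correct and rests on the same core idea as the paper's: track the set of positions queried along the ``all-agree'' run of the deterministic tree (the path answering every query with $x_j$), note that this set has size at most the query budget, and combine the fractional-block-sensitivity LP constraints with a union bound over this small set to show that a cheap algorithm must miss a block of non-negligible weight. The paper packages this via Yao's minimax principle with an explicit hard distribution, whereas you average directly over the randomized algorithm's coins; this is a presentational dualization of the same counting argument, and your observation that only the trivial direction $\R(f_\sab) = O(\R_0(f_\sab)) = O(\RS(f))$ is needed is a mild simplification.
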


    \begin{proof}
        From \cite[Theorem~3.3]{BK18}, we find that $\R(f_{\sab}) = \Theta(\RS(f))$, and so it suffices to show that $\fbs(f) \leq 10\R(f_{\sab})$. By Yao's minimax principle, it suffices to exhibit a distribution over inputs for which any deterministic algorithm fails to compute $f_{\sab}$ correctly with probability at least $2/3$ in less than $\fbs(f)/10$ queries.

        Without loss of generality, let $x \in f^{-1}(0)$ be the instance for which fractional block-sensitivity is maximized, let $Y = f^{-1}(1)$, and let $(w_y)_{y \in Y}$ be the optimal weight assignment in the fractional block sensitivity linear program (see Definition~\ref{def:fbs}). We can similarly think of every $y \in Y$ as obtained by flipping the bits in $x$ that belong to a sensitive block $B$. We denote $y = x_B$, and write $w_y = w_B$.

        Let $T$ be a deterministic algorithm for $f_\sab$ with query complexity less than $\fbs(f)/10$, and let $\mu$ be a distribution defined as follows:
        for all $B \subseteq [n]$ that is a sensitive block for $f$, let $\mu([x,x_B,*]) = \mu([x,x_B,\dagger]) = w_B/(2\fbs(f))$. Since $\sum_{B \subseteq [n]}w_B = \fbs(f)$ where the sum is over all sensitive blocks for $f$, $\mu$ forms a legitimate probability distribution.

        Consider a leaf $L$ of $T$, and suppose that the output at $L$ is $b \in \zone$. By the definition of fractional block sensitivity, we know that for any $j \in [n]$, we have $\sum_{B : j \in B} w_B \leq 1$. Thus, by a union bound, the $\mu$-mass of inputs supported by our distribution (of the form $[x,x_B,*]$ or $[x,x_B,\dagger]$) reaching $L$, and such that no bit of $B$ has been read on this path is at least $1 - \rbra{\frac{\fbs(f)}{10} \cdot \frac{1}{\fbs(f)}} \geq 9/10$. Note that for each such $B$, both the input $[x, x_B, *]$ and $[x, x_B, \dagger]$ reach $L$. Since $\mu$ allotted equal mass to each such pair, this means an error is made at $L$ for inputs with $\mu$-mass at least $9/20 > 1/3$, concluding the proof.
    \end{proof}

    We now turn to lower bounding $\QS_{\strong}(f)$ by $\Omega(\sqrt{\fbs(f)})$. To that end, recall that for any Boolean function $f$, we have the chain of inequalities $\Q(f) = \Omega(\ADV^+(f)) = \Omega(\sqrt{\fbs(f)})$. The first inequality can be found in~\cite{SS06}, for example. While we believe the second inequality is known, we were again unable to find a published proof of it. We include a proof here for completeness.

    \begin{lemma}
        \label{lem:fbsvsadv}
        Let $n$ be a positive integer, $D \subseteq \zone^n$, and $f : D \to \zone$ be a (partial) Boolean function. Then, $\sqrt{\fbs(f)} = O(\ADV^+(f))$.
    \end{lemma}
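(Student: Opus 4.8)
The plan is to exhibit an explicit feasible non-negative adversary matrix whose operator norm equals $\sqrt{\fbs(f)}$; this is the same construction used above to lower bound $\ADV^+(f_\sab^\strong)$, but without the sabotage bookkeeping, so it is actually simpler. I may assume $\fbs(f) > 0$, since otherwise $f$ is constant and there is nothing to prove.

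First I would pick an input $x \in D$ witnessing $\fbs(f) = \fbs(f,x)$, and without loss of generality assume $f(x) = 0$ (the case $f(x)=1$ is symmetric). Let $Y = f^{-1}(1)$ and let $(w_y)_{y \in Y}$ be an optimal solution of the fractional block sensitivity program of Definition~\ref{def:fbs}, so that $\sum_{y \in Y} w_y = \fbs(f)$ and $\sum_{y \in Y : x_j \neq y_j} w_y \leq 1$ for every $j \in [n]$. I would then define $\Gamma \in \mathbb{R}^{D \times D}$ to be all-zeros except $\Gamma[x,y] = \Gamma[y,x] = \sqrt{w_y}$ for each $y \in Y$. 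This matrix is symmetric and entry-wise non-negative, and since its only nonzero entries pair the $0$-input $x$ with $1$-inputs, the constraint $\Gamma[u,v] = 0$ whenever $f(u) = f(v)$ holds; note also that $x \notin Y$.

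Next I would compute the relevant norms. After deleting the identically-zero rows and columns, $\Gamma$ is the ``arrowhead'' matrix $\begin{bmatrix} 0 & \sqrt{w}^{\,T} \\ \sqrt{w} & 0 \end{bmatrix}$ where $\sqrt{w} \in \mathbb{R}^{Y}$ has entries $\sqrt{w_y}$; such a matrix has rank $2$ with nonzero eigenvalues $\pm \norm{\sqrt{w}}_2$, so $\norm{\Gamma} = \norm{\sqrt{w}}_2 = \sqrt{\sum_{y \in Y} w_y} = \sqrt{\fbs(f)}$. For each $j \in [n]$, since $\Delta_j[x,y] = 1$ iff $x_j \neq y_j$, the matrix $\Gamma \circ \Delta_j$ has exactly the same arrowhead shape but with $\sqrt{w}$ replaced by its restriction to $\{y \in Y : x_j \neq y_j\}$; the same eigenvalue computation gives $\norm{\Gamma \circ \Delta_j} = \sqrt{\sum_{y \in Y : x_j \neq y_j} w_y} \leq 1$, using the LP constraint. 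Hence $\Gamma$ is already feasible for the program in Definition~\ref{def:adv} with objective value $\sqrt{\fbs(f)}$, so $\ADV^+(f) \geq \sqrt{\fbs(f)}$, which is the claimed bound (in fact with hidden constant $1$).

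I do not expect a genuine obstacle here: the proof reduces to the operator-norm computation for arrowhead matrices, which is the same fact already invoked in the sabotaged lemma above (there with an extra tensor factor $\begin{bmatrix} 0 & 1 \\ 1 & 0 \end{bmatrix}$, which is absent here since one ``side'' is the singleton $\{x\}$). The only points requiring care are the degenerate case $\fbs(f) = 0$, the reduction to $f(x) = 0$ by symmetry, and the trivial observation $x \notin Y$ which is what makes the displayed block structure of $\Gamma$ correct.
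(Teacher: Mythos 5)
Your proof is correct and matches the paper's proof essentially verbatim: both exhibit the same rank-two ``arrowhead'' adversary matrix $\Gamma$ supported on the row and column indexed by the fractional-block-sensitivity witness $x$, with entries $\sqrt{w_y}$, and both compute $\norm{\Gamma} = \sqrt{\fbs(f)}$ and $\norm{\Gamma\circ\Delta_j}\le 1$ from the LP constraints. The only cosmetic differences are that you explicitly spell out the eigenvalue computation, the degenerate case $\fbs(f)=0$, and the WLOG reduction to $f(x)=0$, none of which change the substance.
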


    \begin{proof}
        Recall the definition of fractional block sensitivity, Definition~\ref{def:fbs}. Let $x$ be the input to $f$ for which $\fbs(f,x) = \fbs(f)$. Let $(w_y)_{y \in Y}$ be the optimal solution of the linear program. We now define an adversary matrix $\Gamma$, that forms a feasible solution to the optimization program in Definition~\ref{def:adv}. In particular, we define $\Gamma \in \mathbb{R}^{D \times D}$ as the all-zeros matrix, except for the entries $(x,y)$ and $(y,x)$ where $y \in Y$, which we define to be
        \[\Gamma[x,y] = \Gamma[y,x] = \sqrt{w_y}.\]

        The matrix $\Gamma$ we constructed is very sparse. It is only non-zero in the row and column that is indexed by $x$. Moreover, $\Gamma[x,x] = 0$ as well. In particular, this makes computing its norm quite easy, we just have to compute the $\ell_2$-norm of the column indexed by $x$.

        For any $j \in [n]$, we thus find
        \[\norm{\Gamma \circ \Delta_j}^2 = \sum_{\substack{y \in Y \\ x_j \neq y_j}} \Gamma[x,y]^2 = \sum_{\substack{y \in Y \\ x_j \neq y_j}} w_y \leq 1,\]
        and so $\Gamma$ is a feasible solution to the optimization program in Definition~\ref{def:adv}. Finally, we have
        \[\ADV^+(f)^2 \geq \norm{\Gamma}^2 = \sum_{y \in Y} \Gamma[x,y]^2 = \sum_{y \in Y} w_y = \fbs(f).\qedhere\]
    \end{proof}

    \section{Missing proof}

    In Theorem~\ref{thm:sabIND} we showed using the adversary method that $\QS_\weak(\IND_n) = \Omega(n)$. We now show that the same proof can be adapted to show the same lower bound on $\QS_\strong(\IND_n)$ as well.

    \begin{corollary}
        \label{thm:sabindStrong}
        Let $n$ be a positive integer. Then, $\QS_{\strong}(\IND_n) = \Omega(n)$.
    \end{corollary}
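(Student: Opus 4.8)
The plan is to re-use, essentially verbatim, the hard relation constructed in the proof of Theorem~\ref{thm:sabIND}, now read in the strong sabotage input model, and re-run the argument through Lemma~\ref{lem: ambainis v2}. Recall that $\QS_\strong(\IND_n) = \Q(\IND_{n,\sab}^\strong)$, and that in the strong model a query at position $j$ returns the triple $(x_j, y_j, z_j)$, where $x$ is the underlying $0$-input, $y$ the underlying $1$-input, and $z$ the sabotaged string. I would keep the sets $X$ and $Y$ from \eqref{eq:XforR} and \eqref{eq:YforR}, now viewed as strong sabotage inputs: for $(a,b) \in X$ the underlying pair is $\big((a, 0^{2^n}), (a, e_a)\big)$, where $e_a \in \zone^{2^n}$ is all-$0$ except for a $1$ at location $\bin(a)$, and for $(a,b) \in Y$ the underlying pair is again $\big((a, 0^{2^n}), (a, e_a)\big)$. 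I would keep the same relation: $\big((a_1,b_1),(a_2,b_2)\big) \in R$ iff $|a_1 \oplus a_2| = 2$. Since Ambainis's adversary method is insensitive to the input alphabet, one may regard each position of the strong oracle as holding a symbol from a larger alphabet (equivalently, split each position into three coordinates carrying $x_j$, $y_j$, $z_j$); in either reading, $X$, $Y$ and $R$ are unchanged, so $m_X = m_Y = \binom{n}{2}$, exactly as before.

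It then remains to check that revealing the extra coordinates $x_j$ and $y_j$ does not increase $\ell_{\max}$, which I would do by the same case split as in the proof of Theorem~\ref{thm:sabIND}. If $j$ lies in the index block ($j \in [n]$), the two underlying inputs agree there, so the triple returned is $(a_j, a_j, a_j)$; two related inputs are distinguished at $j$ iff $a_j$ differs, and the two relevant counts are each the number of Hamming-distance-$2$ neighbours that differ at coordinate $j$, namely $n-1$, giving product $(n-1)^2$, just as before. If $j$ lies in the array block ($j \in [2^n]$), the $0$-input is always $0$ there, so the $x$-coordinate is constantly $0$ and never separates two inputs; the $y$-coordinate equals $1$ exactly at $j = \bin(a)$ and the $z$-coordinate equals $*$ (respectively $\dagger$) exactly at $j = \bin(a)$, so separation at $j$ still forces $j \in \cbra{\bin(a_1), \bin(a_2)}$, and the computation in Theorem~\ref{thm:sabIND} carries over for both the $y$- and the $z$-coordinate, yielding counts $\binom{n}{2}$ and $1$ (up to swapping the two sides). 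Hence $\ell_{\max} = \min\cbra{\binom{n}{2}, (n-1)^2}$, unchanged from the weak case.

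Feeding $m_X = m_Y = \binom{n}{2}$ and $\ell_{\max} = \min\cbra{\binom{n}{2}, (n-1)^2}$ into Lemma~\ref{lem: ambainis v2} then gives $\QS_\strong(\IND_n) = \Omega\rbra{\sqrt{\binom{n}{2}^2 / \min\cbra{\binom{n}{2}, (n-1)^2}}} = \Omega(n)$. The only point requiring genuine care is the bookkeeping in the array-block case: one must verify that the $x$-coordinate exposed by the strong oracle is constant on the relevant inputs (hence gives an adversary nothing new) and that the $y$-coordinate is ``parallel'' to the $z$-coordinate --- the same set of separating positions, and the same $\ell$-counts --- so that no new distinguishing position is created and none of the products $\ell_{x,i}\ell_{y,i}$ grow; everything else is a transcription of the proof of Theorem~\ref{thm:sabIND}. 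A minor secondary point is that Lemma~\ref{lem: ambainis v2} is stated for the alphabet $\cbra{0,1,*,\dagger}$, so one should either invoke the general-alphabet form of \cite[Theorem~6.1]{Amb02} directly or note that a strong query is simulable with $O(1)$ queries over the larger alphabet; either way the stated counts are what feed into the bound.
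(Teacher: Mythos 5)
Your proposal is correct and mirrors the paper's own proof: you reuse the hard relation $R$ from Theorem~\ref{thm:sabIND}, observe that for these instances the strong oracle's extra coordinates are redundant (the $x$-coordinate is identically $0$ on the array block and equals $a_j$ on the index block, while the $y$-coordinate separates inputs at exactly the same positions as the $z$-coordinate), so $m_X$, $m_Y$, and the $\ell$-counts carry over verbatim, and then invoke Lemma~\ref{lem: ambainis v2}. One minor correction: since $\ell_{\max}$ is defined in Lemma~\ref{lem: ambainis v2} as a \emph{maximum} over indices and pairs in $R$, the correct value is $\max\cbra{\binom{n}{2}, (n-1)^2} = (n-1)^2$, not the $\min$ you wrote (you inherited a typo from the paper's weak-case proof; the paper's own strong-case proof uses $\max$) --- either way the bound is $\Omega(n)$, so the conclusion stands.
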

    \begin{proof}
        Let $N = n + 2^n$.
        In the proof of Theorem~\ref{thm:sabIND} we constructed a hard relation $R$ for $f = \IND_{n,\sab}$ to show a $\Omega(n)$ lower bound in the weak sabotage model. Using $R$ we construct a hard relation for $\IND_{n, \sab}^\strong$, which we denote by $R^{\strong}$ in the strong sabotage model, and use Lemma~\ref{lem: ambainis v2}. Define $((x,y,*),(x',y',\dagger)) \in R^{\strong}$ if and only if all of the following hold true:\footnote{Recall that $(x,y,*)$ denotes $((x_j,y_j,z_j))^N_{j=1}$ where $z = \sabStar{x}{y}$ and $(x,y,\dagger)$ denotes $((x_j,y_j,z_j))^N_{j=1}$ where $z = \sabDagger{x}{y}$; see Definition~\ref{def:fsab}.}
        \begin{enumerate}
            \item $(x,y,*)\coloneqq ((x_j,y_j,z_j))^N_{j=1} \in S^{\strong}_{*}$ for $\IND_n$, $(x',y',\dagger)\coloneqq ((x'_j,y'_j,z'_j))^N_{j=1} \in S^{\strong}_{\dagger}$ for $\IND_n$,
            \item $(z,z') \in R$.
        \end{enumerate}
        Following the language of Lemma~\ref{lem: ambainis v2} we have
        \begin{equation}
            X^{\strong}=\{(x,y,*)\coloneqq ((x_j,y_j,z_j))^N_{j=1} : z \in X \text{~as defined in Equation~\ref{eq:XforR}} \}.
        \end{equation}
        Similarly,
        \begin{equation}
            Y^{\strong}=\{(x,y,\dagger)\coloneqq ((x_j,y_j,z_j))^N_{j=1} : z \in Y \text{~as defined in Equation~\ref{eq:YforR}} \}.
        \end{equation}
        We now analyze the quantities $m_{X^{\strong}}$, $m_{Y^{\strong}}$ and $\ell_{\max}$ from Lemma~\ref{lem: ambainis v2}.
        \begin{enumerate}
            \item As described above, the way we construct $R^\strong$ using elements of $R$ ensures that $m_{X^{\strong}} \geq m_X$ and $m_{Y^{\strong}} \geq m_Y$. Additionally, for every $z \in X \cup Y$ there is exactly one pair in $\IND^{-1}(0) \times \IND^{-1}(1)$ for which $z$ is the sabotaged input. Hence, $m_{X^{\strong}} = m_X =\binom{n}{2}$ and $m_{Y^{\strong}} = m_Y = \binom{n}{2}$.
            \item $\ell_{\max}=\max \cbra{\binom{n}{2}, (n-1)^2}$: we consider two cases.
            \begin{enumerate}
                \item $i \in [n]$: Fix an $(x,y,*),(x',y',\dagger) \in R^{\strong}$ differing at an index $i \in [n]$. Recall that $\ell_{(x,y,*),i}$ denotes the number of $(x',y',\dagger) \in Y^{\strong}$ such that $(x,y,*),(x',y',\dagger) \in R^{\strong}$ and $(x_i,y_i,z_i) \neq (x'_i,y'_i,z'_i)$. The construction of sets $X^{\strong}, Y^{\strong}$ (which is in turn based on $X,Y$, respectively) ensures that for all $i \in [n]$ we have $x_i=y_i=z_i$ and $x'_i=y'_i=z'_i$. Hence, directly using the same arguments as in Item~\ref{item:proof}, we get $\ell_{(x,y,*),i}=n-1$ and $\ell_{(x',y',\dagger),i}=n-1$, hence $\ell_{(x,y,*),i} \cdot \ell_{(x',y',\dagger),i}=(n-1)^2$.

                \item $i \in [n+1,N]$: Fix an $(x,y,*)\coloneqq ((x_j,y_j,z_j))^{N}_{j=1},(x',y',\dagger) \coloneqq ((x'_j,y'_j,z'_j))^{N}_{j=1} \in R^{\strong}$ differing at an index $i \in [n+1,N]$. By the structure of $R$ and by extension of $R^{\strong}$, for strings $z\coloneqq(a,b)$ and $ z'\coloneqq (a',b')$ the string $b$ is all-$0$ string except for the $\bin(a)$'th location which is a $*$, and string $b'$ is all-$0$ string except for the $\bin(a')$'th location which is a $\dagger$. Thus the only important case is for $i \in \{n+\bin(a),n+\bin(a')\}$. Without loss of generality, assume $i=n+\bin(a)$, and thus $z_i=b_{\bin(a)}=*$. Moreover, for every $z \in X \cup Y$ there is exactly one pair in $\IND^{-1}(0) \times \IND^{-1}(1)$ that sabotage as $z$. So, we have $\ell_{z,i}=\binom{n}{2}$ and $\ell_{z',i}=1$ as $z'_i=b^{'}_{\bin(a')}=0$. Therefore, $\ell_{z,i} \cdot \ell_{z',i}=\binom{n}{2}$.
            \end{enumerate}
        \end{enumerate}
        Lemma~\ref{lem: ambainis v2} then implies
        \begin{equation}
            \QS_{\strong}(\IND_n)= \Q(\IND_{n, \sab}^\strong) = \Omega \left( \sqrt{\frac{\binom{n}{2}^2}{\max \cbra{\binom{n}{2},(n-1)^2}}} \right)=\Omega(n).
        \end{equation}
    \end{proof}
\end{document}